\documentclass[10pt,twocolumn,twoside]{IEEEtran}
\IEEEoverridecommandlockouts                              % This command is only needed if 
\usepackage{algorithmic}
\usepackage{float}
\usepackage{placeins}
\usepackage{placeins}
\usepackage{epsfig}
\usepackage{placeins}
\usepackage{booktabs}
\usepackage{array}
\usepackage{siunitx}
\usepackage{graphicx}
\usepackage{graphics}
\usepackage{amsthm}
\usepackage{textcomp}
\usepackage{url}
\usepackage{comment}
\usepackage{xspace}
\usepackage{soul}
\usepackage{xcolor}
\usepackage{cite}
\usepackage{hyperref}
\hypersetup{
    colorlinks = true,
    linkcolor = [rgb]{0,0,1},
    anchorcolor = [rgb]{0,0,1},
    citecolor = [rgb]{0.9,0.5,0},
    filecolor = [rgb]{0,0,1},
    urlcolor = [rgb]{0.9,0.5,0},
    bookmarksopen = true,
    bookmarksnumbered = true,
    breaklinks = true,
    linktocpage,
    colorlinks = true,
    linkcolor = [rgb]{0.2,0.6,0.9},
    urlcolor  = [rgb]{0,0,1},
    citecolor = [rgb]{0.9,0.5,0},
    anchorcolor = [rgb]{0.2,0.6,0.2},
}
\usepackage{amsmath,amsfonts,amssymb,amscd}
\usepackage{capt-of}

\newtheorem{theorem}{Theorem}[section]

\newtheorem{proposition}[theorem]{Proposition}

\newtheorem{remark}[theorem]{Remark}

\floatstyle{ruled}
\newfloat{model}{H}{mod}
\floatname{model}{\footnotesize Model}
\newfloat{notatio}{H}{not}
\floatname{notatio}{\footnotesize Notation}

\newcommand{\acronymA}{SA-GRU}
\newcommand{\acronymB}{$\eta$-C$\ell$-GRU}

\newenvironment{list4}{
	\begin{list}{$\bullet$}{%
			\setlength{\itemsep}{0.05cm}
			\setlength{\labelsep}{0.2cm}
			\setlength{\labelwidth}{0.3cm}
			\setlength{\parsep}{0in} 
			\setlength{\parskip}{0in}
			\setlength{\topsep}{0in} 
			\setlength{\partopsep}{0in}
			\setlength{\leftmargin}{0.16in}}}
	{\end{list}}

\makeatletter
\@ifundefined{acronym}{\newcommand{\acronym}{\textnormal{L-GRU}}}
{\renewcommand{\acronym}{\textnormal{L-GRU}}}
\@ifundefined{acronymA}{\newcommand{\acronymA}{SA-GRU}}{\renewcommand{\acronymA}{SA-GRU}}
\@ifundefined{acronymB}{\newcommand{\acronymB}{DCL-GRU}}{\renewcommand{\acronymB}{DCL-GRU}}
\makeatother

\title{\huge \bf Lightweight Gated Recurrent Unit Variants \\ for Real-Time Channel Prediction}

\author{Kyriakos Christodoulides, Kyriakos M.\ Deliparaschos, Risto Wichman, and Themistoklis Charalambous%
\thanks{K.~Christodoulides and T.~Charalambous are with the Department of Electrical and Computer Engineering, University of Cyprus, 1678 Nicosia, Cyprus. T.~Charalambous is also a Visiting Professor with the Department of Electrical Engineering and Automation, School of Electrical Engineering, Aalto University, Espoo, Finland. E-mails:{\tt~christodoulides.kyriakos@ucy.ac.cy},
{\tt~charalambous.themistoklis@ucy.ac.cy}}

\thanks{K.~M.\ Deliparaschos is with the Department of Electrical Engineering, Computer Engineering and Informatics, Cyprus University of Technology, Limassol, Cyprus. E-mail:{\tt~k.deliparaschos@cut.ac.cy}.}
\thanks{R.~Wichman is with the Department of Information and Communications Engineering, School of Electrical Engineering, Aalto University, Espoo, Finland. E-mail:{\tt~risto.wichman@aalto.fi}.}
\thanks{Preliminary results of this paper have been accepted for publication in IEEE CAMAD~\cite{christodoulides2026lightweight}. The journal version substantially extends the conference paper through a new variant of lightweight GRU, herein called Doubly Constrained Lightweight GRU (DCL-GRU), a more detailed system and channel formulation, an expanded methodological and theoretical treatment, and a broader numerical evaluation across model architectures, optimisation budgets, computational cost, and received SNR.}
}

\begin{document}

\maketitle

% =================================
% Abstract
% =================================
\begin{abstract}
Machine-learning-based channel predictors must operate under stringent latency, memory, and computational constraints while remaining robust to noisy and time-varying observations. This paper develops a causal channel-prediction framework based on three single-layer gated recurrent unit variants: an unconstrained lightweight GRU (L-GRU), a stability-aware GRU (SA-GRU) with a spectral bound on the candidate-state recurrent matrix, and a doubly constrained lightweight GRU (DCL-GRU) with additional control of the reset-gate recurrent matrix. A sufficient condition is derived for contraction of the complete candidate-state mapping, while preserving the parameter count and inference-time structure of the baseline architecture. These guarantees apply to the candidate-state
mapping and do not directly imply contraction of the complete GRU hidden-state transition. The models are trained on $2\times2$ MIMO channels generated using the 3GPP CDL-A model, and their hyperparameters are selected through Bayesian optimisation with Optuna's Tree-structured Parzen Estimator. Across the considered SNR range, the constrained variants retain competitive prediction accuracy and achieve optimisation runtimes close to L-GRU, with speedups of $1.72\times$ and $1.76\times$ relative to a five-layer GRU for SA-GRU and DCL-GRU, respectively. All audited constrained runs satisfy the prescribed spectral bounds. Under temporary observation corruption followed by recursive prediction, SA-GRU reduces the mean and peak hidden-state deviations by approximately $15.3\%$ and $13.0\%$, respectively, relative to L-GRU, whereas L-GRU achieves the lowest rollout NMSE. These results highlight an explicit trade-off between
prediction accuracy, empirical rollout robustness, and candidate-state stability guarantees.
\end{abstract}

\begin{IEEEkeywords}
Bayesian optimisation, causal channel prediction, gated recurrent unit, recurrent stability, time-varying MIMO channels.
\end{IEEEkeywords}

% =================================
% Introduction
% =================================
\section{Introduction}
\label{sec:intro}

Wireless channels are inherently time-varying and frequency-selective due to multipath propagation, user mobility, and environmental dynamics~\cite{hlawatsch2011wireless}. The strong temporal correlation induced by Doppler shifts and multipath fading makes channel estimation and prediction an inherently sequential task, requiring models that can capture time-evolving dependencies across successive channel realisations. Accurate channel estimation and prediction are crucial for coherent demodulation, adaptive modulation, and resource allocation in modern communication systems, particularly in rapidly time-varying or high-mobility scenarios.

% ---------------------------------------------------
% Model-based channel estimation and prediction
% ---------------------------------------------------
Time-varying channel estimation and prediction have traditionally relied on model-based approaches grounded in analytical Doppler models. In particular, physical-model-based estimators have been proposed in~\cite{Lindbom,Bosisio}, where autoregressive (AR) and ARMA processes are fitted to the classical Jakes Doppler spectrum~\cite{Jakes} and embedded into a Kalman filtering framework for channel tracking. Similar methodologies have been extended to long-range channel prediction in~\cite{Ekman_Thesis,Ekman1,Sternad2}, exploiting the fact that the signal bandwidth is typically much larger than the maximum Doppler frequency. Despite their solid theoretical foundation, these approaches are inherently sensitive to model mismatch: practical propagation environments often deviate from the idealised Jakes spectrum, leading to an irreducible error floor in prediction performance~\cite{Zhao}. A comprehensive review in~\cite{Hallen_Dec07} shows that AR-based predictors generally outperform sum-of-sinusoids (SoS) methods for short prediction horizons under statistical channel models, whereas SoS predictors achieve longer horizons mainly in synthetic scenarios.

% ---------------------------------------------------
% Basis Expansion Models
% ---------------------------------------------------
To mitigate sensitivity to Doppler model assumptions, Basis Expansion Model (BEM) estimators represent the time-varying channel as a low-dimensional expansion over deterministic basis functions, making them more robust to spectral mismatch~\cite{Niedzwiecki_book,Tsatsanis}. BEM-based estimators have been shown to outperform Doppler-model-based Kalman filters in terms of MSE under mismatch conditions. Several basis choices have been investigated, including discrete prolate spheroidal sequences (DPSS)~\cite{Zemen1} and discrete cosine transform (DCT) bases~\cite{Schmidt11dct,Tang_May07}. Recursive BEM formulations can be embedded into a Kalman filtering framework with time-invariant dynamics, yielding a stationary Kalman gain and significantly reduced computational complexity~\cite{Lindbom,Bosisio,Muralidhar_Jul09}. While such approaches achieve improved robustness and longer prediction horizons compared to AR-based predictors, they remain constrained by the choice of fixed basis functions.

% ---------------------------------------------------
% Limitations of model-based approaches
% ---------------------------------------------------
Despite their efficiency and interpretability, model-based and BEM-based predictors depend on modeling assumptions, limiting their adaptability in nonlinear, nonstationary, or heterogeneous propagation environments. As wireless systems evolve toward highly dynamic scenarios with varying mobility, traffic patterns, and heterogeneous radio propagation environments, these assumptions become increasingly restrictive.

% ---------------------------------------------------
% Deep Neural Sequence Models
% ---------------------------------------------------
Recent advances in machine learning have led to the emergence of \emph{Deep Neural Sequence Models} (DNSMs) for temporal prediction tasks. DNSMs encompass recurrent architectures such as Recurrent Neural Networks (RNNs)~\cite{info15090517}, including GRU and LSTM variants, as well as causal convolutional neural networks (CNNs)~\cite{CNNsurvey:2022}. RNNs offer a particularly attractive data-driven alternative for channel prediction by learning temporal channel dynamics directly from observations. By maintaining and recursively updating an internal hidden state, RNN-based architectures can capture temporal correlations and adapt to evolving channel statistics without relying on explicit Doppler models or predefined basis expansions. This makes them well suited for nonstationary wireless environments. In contrast, CNN-based approaches primarily exploit spatial or short-range temporal correlations and typically require stacked historical inputs to process sequential data. Such input buffering increases latency and computational overhead, limiting their suitability for real-time PHY-layer deployment where strict causality and low latency are essential.

% ---------------------------------------------------
% GRUs and lightweight recurrent architectures
% ---------------------------------------------------
Among recurrent architectures, Gated Recurrent Units (GRUs)~\cite{DBLP:journals/corr/ChoMBB14} provide a favourable balance between modelling capacity and computational efficiency, making them particularly well suited for causal PHY-layer channel prediction~\cite{Salem2022}. GRUs maintain a hidden state that evolves recursively over time, allowing them to capture temporal correlations in the channel coefficients and to adapt to time-varying propagation conditions. Compared to vanilla RNNs, the gating mechanism in GRUs mitigates vanishing and exploding gradient issues, enabling stable learning of both short-term fluctuations and longer-term channel trends. In contrast to Long Short-Term Memory (LSTM) networks~\cite{SHERSTINSKY:2020}, GRUs employ a simplified architecture in which the input and forget gates are combined into a single update gate, while a reset gate controls the incorporation of new information. This reduction in architectural complexity leads to a significantly lower number of trainable parameters and reduced inference latency, while preserving comparable expressive power for modelling temporal dynamics. These properties make GRUs particularly attractive for real-time wireless systems, where tight latency, energy, and hardware constraints limit the feasibility of deeper or more complex recurrent models.

Despite these advantages, deploying standard GRU architectures at the physical layer remains challenging in highly resource-constrained or latency-sensitive scenarios, such as edge devices, IoT nodes, or FPGA-based implementations. In such settings, even moderate hidden-state dimensions may result in non-negligible computational overhead and memory usage, especially when operating at high symbol rates or under stringent real-time requirements. This motivates the design of \emph{lightweight recurrent architectures} that preserve the temporal modelling capability of GRUs while further reducing computational and implementation complexity.

Furthermore, most existing approaches primarily focus on improving prediction accuracy through architectural complexity, optimiser selection, or online adaptation mechanisms, while implicitly assuming that stability emerges from training. In practical wireless systems, however, this assumption is often violated. Channel predictors are deployed in a strictly causal and streaming fashion, operate open-loop for extended horizons, and are subject to non-stationary dynamics induced by mobility, Doppler variations, and abrupt environmental changes. In such regimes, even small modelling errors or distribution shifts may accumulate over time, leading to unstable hidden-state trajectories, sensitivity to initialisation, and degraded long-term performance. These effects are particularly evident when recurrent models are repeatedly queried without frequent retraining or reset, as is common in low-latency PHY-layer operation.

Recent work such as ERSO-GRU~\cite{liu_wireless_2023} addresses part of this challenge by incorporating experience replay and optimiser-driven hyperparameter tuning to improve online adaptation and mitigate error accumulation. While effective in practice, this approach treats stability as an emergent property of data-driven adaptation rather than as a structural property of the recurrent dynamics themselves. As a result, stability is not guaranteed \emph{a priori} and remains sensitive to training conditions, replay buffer composition, and optimiser behaviour. In contrast, stability-by-design learning architectures explicitly embed dynamical stability constraints into the model structure. This paradigm has gained increasing attention in related domains, including contractive recurrent neural networks~\cite{NEURIPS2021_Jafarpour,9867357,10582451} and Lyapunov-stable policies in reinforcement learning~\cite{certifyingstabilityreinforcementlearning}. The central idea is to design the model such that its internal state evolution satisfies provable stability properties independently of the specific training data realisation.

Motivated by this perspective, we investigate spectral control of selected recurrent pathways in lightweight GRU predictors. We introduce SA-GRU, which bounds the direct recurrent gain entering the candidate state, and DCL-GRU, which additionally constrains the reset-gate recurrent matrix and satisfies a sufficient condition for contraction of the complete candidate-state mapping. These guarantees do not imply contraction of the complete GRU hidden-state transition, which also depends on the update gate and the direct memory pathway. The proposed constraints preserve the inference-time architecture and parameter count of L-GRU while providing explicit control of recurrent gains. Numerical results under noisy $2\times2$ MIMO channels evaluate the resulting trade-offs among one-step prediction accuracy, computational cost, structural guarantees, and empirical perturbation sensitivity.

The main contributions of this work are summarised as follows:
\begin{list4}
    %\item We formulate a causal channel-prediction pipeline based on single-layer lightweight GRU variants for noisy $2\times2$ MIMO CDL-A channels.
        \item We formulate a causal channel-prediction pipeline based on single-layer lightweight GRU variants for noisy MIMO channels.
    \item We introduce SA-GRU and DCL-GRU and derive spectral conditions that bound the conditional candidate-state pathway and guarantee contraction of the complete candidate-state mapping, respectively.
    \item We evaluate the proposed models through Bayesian hyperparameter optimisation across received SNR, optimisation budget, runtime, and spectral-constraint compliance.
    \item We conduct a paired corrupted-rollout experiment that quantifies the trade-off between recursive prediction accuracy and hidden-state perturbation sensitivity.
\end{list4}

The remainder of this paper is organised as follows. Section~\ref{sec:prelim} introduces the notation and preliminaries. Section~\ref{sec:modelling} presents the MIMO channel model and data representation. Section~\ref{sec:preprossesing} describes the preprocessing steps. Section~\ref{sec:lightweightGRU} introduces the L-GRU, SA-GRU, and DCL-GRU architectures. Section~\ref{sec:bayesian} presents the Bayesian hyperparameter optimisation framework. Section~\ref{sec:prob.sol} describes the complete prediction pipeline. Section~\ref{sec:numerical_evaluations} provides the numerical evaluations. Section~\ref{sec:conclusions} concludes the paper and discusses future directions.

% =================================
% Notation and Preliminaries
% =================================
\section{Notation and Preliminaries}
\label{sec:prelim}

\subsection{Mathematical Notation}

Scalars are denoted by lowercase letters (e.g., $x$), vectors by bold lowercase letters (e.g., $\mathbf{x}$), and matrices or higher-order tensors by bold uppercase letters (e.g., $\mathbf{X}$ and $\mathbf{H}$). Calligraphic letters (e.g., $\mathcal{L}$) denote loss functions. The operator $(\cdot)^\top$ denotes the transpose. The Euclidean norm of a vector is denoted by $\|\cdot\|_2$, and $\odot$ denotes element-wise Hadamard multiplication.

The real and imaginary parts of a complex-valued quantity $x \in \mathbb{C}$ are denoted by $\Re\{x\}$ and $\Im\{x\}$, respectively. The expectation operator is denoted by $\mathbb{E}[\cdot]$. The logistic sigmoid function is written as $\sigma(\cdot)$, and $\tanh(\cdot)$ denotes the hyperbolic tangent.

\subsection{Time and Indexing Conventions}

Discrete time is indexed by $t \in \mathbb{Z}_{\ge 0}$ and corresponds to successive channel snapshots (e.g., OFDM symbols or uniformly sampled channel observations). Delay taps are indexed by $\ell \in \{0,1,\ldots,L-1\}$, where $L$ denotes the total number of delay bins after discretisation. Subcarriers are indexed by $k \in \{0,1,\ldots,N_{\mathrm{sc}}-1\}$.

Transmit and receive antenna indices are denoted by $n_t \in \{1,\ldots,N_{\mathrm{tx}}\}$ and $n_r \in \{1,\ldots,N_{\mathrm{rx}}\}$, respectively. Independent channel realisations generated by the CDL simulator are indexed by $b \in \{1,\ldots,B\}$, where $B$ also denotes the batch size during training.

\subsection{Channel Representation}

The wireless channel is modelled as a time-varying, frequency-selective MIMO channel. Its continuous-time baseband impulse response between transmit antenna $n_t$ and receive antenna $n_r$ is denoted by $h_{n_r,n_t}(t,\tau)$, where $t$ represents time and $\tau$ denotes propagation delay. For learning and simulation purposes, this impulse response is discretised along the delay axis into $L$ delay taps and sampled uniformly in time, yielding a discrete-time, discrete-delay representation. Unless stated otherwise, the channel coefficients are assumed to be complex-valued and zero-mean, with temporal correlation induced by Doppler effects and user mobility. The CDL model introduced in Section~\ref{sec:modelling} provides a physically grounded mechanism for generating such correlated channel realisations.

% =================================
% CDL Channel Modelling and Data Representation
% =================================
\section{CDL Channel Modelling and Data Representation}
\label{sec:modelling}

A transmitted signal reaches the receiver via multiple propagation paths, each experiencing different delays, attenuation, phase shifts, and Doppler shifts. 
Accurately modelling these effects is essential for simulation, channel estimation, and predictive algorithms. 
Here, we model the MIMO channel using the 3GPP Clustered Delay Line (CDL) model~\cite{3gpp38901}.
%The 3GPP TR~38.901 specification defines the Tapped Delay Line (TDL) and Clustered Delay Line (CDL) models. 
%The former is a statistical model in which the channel taps follow a prescribed statistical distribution, while the latter is a geometric model that organises multipath components into physically meaningful clusters, providing a tractable discrete representation for simulation and learning-based modelling. 
%In the CDL framework, a \emph{cluster} represents a group of rays sharing similar propagation characteristics, such as delay and angular distribution. Each cluster contains multiple rays that may differ slightly in amplitude, phase, or Doppler shift.
%\subsection{Clusters and Channel Taps}
%\label{subsec:clusters}
The baseband-equivalent impulse response between transmit antenna $n_t$ and receive antenna $n_r$ is expressed as
\begin{equation}
	h_{n_r,n_t}(t,\tau) = \sum_{k=1}^{K} \sum_{m=1}^{M_k}
	\alpha_{k,m} e^{j 2\pi \nu_{k,m} t} \, \delta(\tau - \tau_k),
\end{equation}
where $K$ is the number of clusters, $M_k$ is the number of rays in cluster $k$, $\tau_k$ denotes the mean delay of cluster $k$, $\alpha_{k,m}$ is the complex gain of ray $m$ in cluster $k$, and $\nu_{k,m}$ is the associated Doppler shift. 
%Clusters thus represent the physical origin of multipath propagation, while individual rays within each cluster provide fine-grained multipath diversity.

To simulate channels in discrete time, the continuous impulse response is sampled along the delay axis into a finite number of \emph{channel taps}, each representing the aggregate contribution of all rays whose delays fall within the corresponding delay bin $\tau_\ell = \ell T_s$,~ $\ell = 0, 1, \ldots, L-1$, where $T_s$ is the sampling interval and $L$ is the total number of taps. The discrete tap values $h(t, \tau_\ell)$ are complex-valued and vary over time due to Doppler shifts. Each cluster may span multiple taps depending on the discretisation resolution, and each tap may contain contributions from multiple clusters. Conceptually, clusters describe the physical structure of the multipath environment, while taps provide a computationally tractable representation suitable for simulation and neural network input.

\subsection{Discrete CDL Channel Tensor}

The CDL simulator generates a discrete-time channel tensor encompassing all transmit and receive antennas, time samples, and delay taps. For a batch of $B$ independent channel realisations, the tensor is represented as $\mathbf{H} \in \mathbb{C}^{[B,\, N_{\text{rx}},\, N_{\text{tx}},\, T,\, L]}$, where $N_{\text{rx}}$ and $N_{\text{tx}}$ are the numbers of receive and transmit antennas, $T$ is the number of time samples (e.g., OFDM symbols or snapshots).
%, and $L$ is the number of channel taps (as defined in Section~\ref{subsec:clusters}). 
The last two dimensions $(T, L)$ describe the temporal evolution of the channel impulse response over time and delay, while the middle dimensions $(N_{\text{rx}}, N_{\text{tx}})$ correspond to the spatial domain across the MIMO antenna elements. The first dimension $B$ indexes independent channel realisations generated under different user positions, Doppler conditions, or random seeds. The tensor $\mathbf{H}$ thus compactly encapsulates the spatiotemporal structure of a wideband, time-varying MIMO channel. Each complex entry
\begin{equation}
	H[b, n_r, n_t, t, \ell] \approx h_{n_r,n_t}(t, \tau_\ell)
\end{equation}
represents the small-scale fading gain of the multipath component arriving at delay $\tau_\ell$ during the $t$-th time instant for the $b$-th channel realisation.

In OFDM systems, data symbols are transmitted simultaneously over multiple orthogonal subcarriers in the frequency domain. 
%Since the CDL simulator produces the time-domain impulse response, it is necessary to transform this representation into the frequency domain to obtain per-subcarrier channel coefficients. This is achieved by 
Applying a discrete Fourier transform (DFT) along the delay dimension:
\begin{equation}
	H_f[b, n_r, n_t, t, k] =
	\sum_{\ell=0}^{L-1}
	H[b, n_r, n_t, t, \ell]
	\,e^{-j 2\pi k \ell / N_{\text{FFT}}},
\end{equation}
where $k \in \{0, 1, \ldots, N_{\text{FFT}}-1\}$ denotes the subcarrier index and $N_{\text{FFT}}$ is the number of FFT points. The resulting frequency-domain tensor
\begin{equation}
	\mathbf{H}_f \in \mathbb{C}^{[B,\, N_{\text{rx}},\, N_{\text{tx}},\, T,\, N_{\text{sc}}]}
\end{equation}
represents the complex channel gain for each subcarrier across time, antennas, and independent realisations, where $N_{\text{sc}} \leq N_{\text{FFT}}$ denotes the number of active subcarriers. As the user or scatterers move, the Doppler effect introduces time variations in $\mathbf{H}_f$, leading to temporal correlation across consecutive OFDM symbols.

\begin{remark}
	In practice, only a subset of subcarriers (pilot subcarriers) are known \emph{a priori} at the receiver. These pilots provide noisy estimates $\tilde{H}_f[b, n_r, n_t, t, k]$ of the true channel coefficients, which serve as partial observations of the underlying channel dynamics. Data-driven recurrent models such as the proposed GRU variants can then be trained to exploit the temporal correlation in these pilot-based sequences to predict future channel states or reconstruct unobserved subcarrier coefficients. This frequency-domain representation forms the direct input to neural architectures designed for channel estimation and prediction in OFDM-based communication systems.
\end{remark}

% =================================
% Preprocessing
% =================================
\section{Preprocessing CDL Outputs for GRU Prediction}
\label{sec:preprossesing}

Before feeding the CDL tensor into the GRU, several preprocessing steps are required.

\subsection{Real-Imaginary Separation}

The complex channel coefficients are first separated into their real and imaginary components. Given the frequency-domain channel tensor $\mathbf{H}_f \in \mathbb{C}^{[B,\, N_{\text{rx}},\, N_{\text{tx}},\, T,\, N_{\text{sc}}]}$, the real and imaginary parts are stacked along a new feature dimension to produce a real-valued tensor $\mathbf{X} \in \mathbb{R}^{[B,\, T,\, N_{\text{rx}} \times N_{\text{tx}} \times N_{\text{sc}} \times 2]}$, preserving the full information content of the complex channel coefficients.

\begin{remark}
	When pilot-based channel estimation is considered, the input sequence may include only the subcarriers corresponding to pilot tones. In such cases, the observed pilot coefficients are extracted from $\mathbf{H}_f$ before real/imaginary separation and normalisation. The GRU then predicts the full channel or future pilot subcarriers, which can be further interpolated across subcarriers if needed.
\end{remark}

\subsection{Normalisation}
\label{subsec:normalisation}

Prior to processing by the recurrent predictor, the input tensor is standardised using training-set statistics:
\begin{equation}
\mathbf{X}_{\mathrm{norm}}
=
\frac{\mathbf{X}-\boldsymbol{\mu}}{\boldsymbol{\sigma}},
\end{equation}
where $\boldsymbol{\mu}$ and $\boldsymbol{\sigma}$ are computed exclusively from the training set and reused without modification during validation and inference.

Normalisation improves gradient conditioning, reduces scale imbalance among channel features, and ensures that training and inference operate under consistent input statistics. The spectral conditions developed in Section~V concern selected recurrent pathways of the GRU rather than the complete input-to-state transition. Consequently, normalisation should be interpreted as a numerical-conditioning and data-scaling operation, not as a mechanism that independently preserves or guarantees contraction of the complete hidden-state dynamics.

From a practical perspective, standardisation prevents features associated with stronger channel coefficients from dominating the optimisation objective and produces more uniform gradient magnitudes across antenna links and temporal samples.

\subsection{Batching}

After normalisation, the preprocessed channel sequences are divided into smaller temporal segments to form mini-batches suitable for training. Batching serves two purposes: it allows the recurrent predictor to learn from short, overlapping sequences of past channel states, and it enables efficient parallel computation on GPUs.

Given a long time series of preprocessed channel features $\{\mathbf{X}_1, \mathbf{X}_2, \ldots, \mathbf{X}_{T_{\text{total}}}\}$, a sliding window of fixed length $T_{\text{seq}}$ is applied to extract shorter subsequences for prediction. Each subsequence of $T_{\text{seq}}$ consecutive time steps serves as one training input, and the immediate next sample forms the corresponding target:
\begin{equation}
	\big\{ (\mathbf{X}_{t:t+T_{\text{seq}}-1},\; \mathbf{X}_{t+T_{\text{seq}}}) \big\}_{t=1}^{T_{\text{total}}-T_{\text{seq}}}.
\end{equation}
The resulting input sequences are grouped into batches of size $B$, producing the training tensor
\begin{equation}
	\mathbf{X}_{\text{batch}} \in \mathbb{R}^{[B,\, T_{\text{seq}},\, F]},
\end{equation}
where $F = N_{\text{rx}} \times N_{\text{tx}} \times N_{\text{sc}} \times 2$ denotes the feature dimension. Each batch contains $B$ temporal windows, enabling efficient gradient updates and data-parallel training.

\subsection{Regularisation}
\label{subsec:Regularisation}

Although the proposed recurrent models use a single recurrent layer, the hidden-state dimension selected through Bayesian optimisation may still provide substantial modelling capacity. This creates a risk of overfitting when the models are trained on strongly correlated CDL-generated channel sequences. Regularisation is therefore used to prevent the predictor from memorising individual channel trajectories and to encourage generalisation to unseen channel realisations.

The dropout probability $p$ controls the strength of regularisation by stochastically masking a subset of the non-recurrent activations during training. In recurrent architectures, applying dropout requires particular care, as indiscriminate masking of recurrent connections can disrupt the temporal continuity that these models are designed to capture. Dropout is therefore applied exclusively to the \emph{non-recurrent} affine transformations, i.e., the input-to-hidden projections (and inter-layer projections in multi-layer configurations). This preserves the integrity of the recurrent pathway while still providing effective regularisation, forcing the network to develop more robust, distributed representations that do not depend on isolated input features or specific structural patterns present in the CDL-generated channel sequences.

Specifically, before the input at time step $t$ is processed by the GRU gates, it is passed through a dropout layer:
\begin{equation}\label{eq:batchGRU}
	\tilde{\mathbf{X}}_{\text{batch}}  = \mathbf{M}_{\text{batch}} \odot \mathbf{X}_{\text{batch}},
\end{equation}
where
\begin{equation}
	\mathbf{M}_{\text{batch}} \in \left\{0,\;\frac{1}{1-p}\right\}^{B \times T_{\mathrm{seq}} \times F}
\end{equation}
is a binary mask whose entries are sampled i.i.d.\ from
\begin{equation}
	m^{(i)}_{b,t} =
	\begin{cases}
		0,              & \text{with probability } p,   \\[4pt]
		\dfrac{1}{1-p}, & \text{with probability } 1-p,
	\end{cases}
\end{equation}
for every batch index $b \in \{1,\ldots,B\}$, time index $t \in \{1,\ldots,T_{\mathrm{seq}}\}$, and feature index $i \in \{1,\ldots,F\}$. This operation randomly masks individual input features but does not remove entire samples or time steps, ensuring that the temporal continuity of each sequence is preserved. The rescaling by $1/(1-p)$ maintains the expected value of the activations and stabilises training.

\begin{remark}
	The optimal value of $p$ depends on several interacting factors, including $H$, the temporal variability of the channel (e.g., Doppler spread), and the diversity of the generated channel realisations, making it unsuitable to fix by heuristic rules. For this reason, $p$ is incorporated into the Bayesian hyperparameter optimisation loop (see Section~\ref{sec:bayesian}), allowing the model to automatically identify the level of regularisation that best balances capacity and generalisation across a wide range of channel conditions.
\end{remark}

% =================================
% Proposed GRU Architecture
% =================================
\section{Proposed GRU Architecture for Causal Channel Prediction}
\label{sec:lightweightGRU}

After preprocessing the CDL channel outputs into sequential, real-valued input tensors, a lightweight GRU is employed to model the temporal evolution of the wireless channel and perform causal prediction. Each training sample consists of a window of $T_\text{seq}$ consecutive time snapshots of the preprocessed channel, used to predict the channel at the next time step. Before entering the recurrent cell, the input features undergo dropout-based regularisation as described in Section~\ref{subsec:Regularisation}, yielding the masked input vectors $\tilde{\mathbf{X}}_t \in \mathbb{R}^{F}$.

\subsection{Typical Lightweight GRU Architecture}

A lightweight GRU maintains a hidden state $\mathbf{h}_t \in \mathbb{R}^{H}$ that evolves with the input sequence and encodes the temporal context of the channel. The evolution of the hidden state is governed by the standard GRU update equations:
\begin{subequations}\label{eq:lGRU}
	\begin{align}
		\mathbf{z}_t         &= \sigma\!\left(\mathbf{W}_z \tilde{\mathbf{X}}_t
		+ \mathbf{U}_z \mathbf{h}_{t-1} + \mathbf{b}_z\right), \\
		\mathbf{r}_t         &= \sigma\!\left(\mathbf{W}_r \tilde{\mathbf{X}}_t
		+ \mathbf{U}_r \mathbf{h}_{t-1} + \mathbf{b}_r\right),
        \label{eq:rt} \\
		\tilde{\mathbf{h}}_t &= \tanh\!\left(\mathbf{W}_h \tilde{\mathbf{X}}_t
		+ \mathbf{U}_h \left(\mathbf{r}_t \odot \mathbf{h}_{t-1}\right)
		+ \mathbf{b}_h\right),
        \label{eq:candidate_state_contractivity} \\
		\mathbf{h}_t         &= \left(\mathbf{1} - \mathbf{z}_t\right) \odot \mathbf{h}_{t-1}
		+ \mathbf{z}_t \odot \tilde{\mathbf{h}}_t,
	\end{align}
\end{subequations}
where $\mathbf{z}_t$ and $\mathbf{r}_t$ are the update and reset gates, respectively, $\tilde{\mathbf{h}}_t$ is the candidate hidden state, $\sigma(\cdot)$ denotes the logistic sigmoid, and $\mathbf{W}_\varkappa$, $\mathbf{U}_\varkappa$, $\mathbf{b}_\varkappa$ for $\varkappa \in \{z,r,h\}$ are trainable parameters.

\begin{remark}
The model defined by the GRU update equations in (11) and the readout in (13) corresponds to the unconstrained lightweight GRU, denoted L-GRU. SA-GRU extends this architecture by imposing the spectral bound in (14) on $\mathbf{U}_h$, whereas DCL-GRU additionally constrains $\mathbf{U}_r$ through (16)--(20).
\end{remark}

\begin{remark}
The term lightweight refers primarily to the single-layer recurrent structure, the absence of a separate LSTM-style memory cell, and the compact channel-feature representation. The hidden dimension is selected through Bayesian optimisation and is not restricted to the interval $[16,64]$. For the GRU update equations in (11) and the linear readout in (13), the parameter count is
\begin{equation}
N_{\mathrm{par}}
=
3HF+3H^2+3H+FH+F.
\end{equation}
For $F=4$ and $H=64$, this gives $13{,}508$ parameters, whereas the $H=240$ model used in the corrupted-rollout experiment contains $177{,}364$ parameters. Despite the larger hidden dimension in the latter experiment, the architecture remains single-layer and retains a substantially lower recurrent depth than the GRU-5L baseline.
\end{remark}

The proposed \acronym\ operates in a strictly \emph{causal} manner: at each time instant $t$, the hidden state $\mathbf{h}_t$ depends exclusively on the current input and on past observations, with no access to future information. After updating the hidden state, a one-step-ahead channel prediction is obtained through a linear readout:
\begin{equation}
	\hat{\mathbf{X}}_{t+1} = \mathbf{W}_o \mathbf{h}_t + \mathbf{b}_o,
\end{equation}
where $\mathbf{W}_o \in \mathbb{R}^{F \times H}$ and $\mathbf{b}_o \in \mathbb{R}^{F}$ are the trainable output weight matrix and bias vector.

% ---------------------------------------------------
% SA-GRU
% ---------------------------------------------------
\subsection{Stability-Aware GRU with Spectral Projection (\acronymA)}
\label{subsec:sa_gru}

While the lightweight GRU in~\eqref{eq:lGRU} provides an efficient architecture for real-time channel prediction, its stability properties are not explicitly controlled. In wireless applications, the predictor operates recursively over long horizons under potentially non-stationary conditions, where uncontrolled recurrent dynamics may amplify perturbations or accumulate numerical errors.

To address this, we introduce \acronymA, a stability-aware variant in which a spectral constraint is imposed on the recurrent matrix $\mathbf{U}_h$ associated with the candidate hidden state. Specifically, we require
\begin{equation}
\left\|\mathbf{U}_h\right\|_2 \le \rho_h,
\label{eq:spectral_constraint_final}
\end{equation}
where $\|\cdot\|_2$ denotes the spectral norm and $\rho_h\in(0,1)$ is a user-defined margin controlling the maximum admissible gain of the candidate-state recurrent operator. All remaining parameters and gating equations in~\eqref{eq:lGRU} are left unchanged. This constraint directly limits the amplification introduced by $\mathbf{U}_h$ along the recurrent pathway entering the candidate hidden state. In particular, for a fixed realisation of the reset gate $\mathbf{r}_t$, the mapping from $\mathbf{h}_{t-1}$ to the candidate state has a Lipschitz constant no greater than $\|\mathbf{U}_h\|_2$, since $\tanh(\cdot)$ is globally $1$-Lipschitz and the entries of $\mathbf{r}_t$ lie in $(0,1)$. Thus,~\eqref{eq:spectral_constraint_final} guarantees contraction of this conditional candidate-state pathway with margin $1-\rho_h$.

To enforce~\eqref{eq:spectral_constraint_final}, we apply a one-time post-training spectral projection to $\mathbf{U}_h$. Let $\widehat{\mathbf{U}}_h$ denote the unconstrained matrix obtained through standard stochastic gradient-based optimisation. After convergence, the deployed recurrent matrix is defined as
\begin{equation}
\mathbf{U}_h
=
\frac{\rho_h}
{\max\!\left(
\left\|\widehat{\mathbf{U}}_h\right\|_2,\,
\rho_h
\right)}
\widehat{\mathbf{U}}_h.
\label{eq:uh_rescaling}
\end{equation}
This projection leaves $\widehat{\mathbf{U}}_h$ unchanged whenever it already satisfies~\eqref{eq:spectral_constraint_final}; otherwise, it rescales the matrix so that $\|\mathbf{U}_h\|_2 = \rho_h$.

The training procedure remains unchanged: all parameters are first learned using a standard optimiser (e.g., Adam~\cite{adam}), and the spectral projection is applied only after convergence. The proposed modification therefore does not alter the backpropagation procedure, introduce additional trainable parameters, or add gates, hidden states, or matrix multiplications during inference. Its contribution is instead to provide a simple, deployment-stage mechanism for explicitly controlling the spectral gain of the candidate-state recurrent operator while preserving the architecture and per-step computational structure of the original lightweight GRU. Accordingly, \acronymA\ is referred to as \emph{stability-aware} rather than globally contractive by construction.

\begin{proposition}[Spectral-gain bound of \acronymA]
\label{prop:acronym_spectral_bound}
Consider the candidate-state update of \acronymA\ in~\eqref{eq:candidate_state_contractivity}. Suppose that the post-training projection enforces $\|\mathbf{U}_h\|_2 \le \rho_h$ with $\rho_h\in(0,1)$. Then, for any fixed input $\mathbf{x}_t$ and fixed reset-gate realisation $\mathbf{r}_t$, the candidate-state mapping is Lipschitz continuous with respect to $\mathbf{h}_{t-1}$ with Lipschitz constant at most $\rho_h$. Therefore, the proposed projection bounds the direct recurrent gain of the candidate-state pathway by $\rho_h$ without changing the inference-time architecture of the lightweight GRU.
\end{proposition}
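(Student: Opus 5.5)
The plan is to fix the input $\mathbf{x}_t$ (equivalently $\tilde{\mathbf{X}}_t$) and the reset-gate realisation $\mathbf{r}_t\in(0,1)^H$. Under this freezing, the candidate state in~\eqref{eq:candidate_state_contractivity} is the composition of three maps:
\begin{equation*}
\mathbf{h}\mapsto \mathbf{D}_r\mathbf{h},\qquad
\mathbf{v}\mapsto \mathbf{W}_h\tilde{\mathbf{X}}_t+\mathbf{U}_h\mathbf{v}+\mathbf{b}_h,\qquad
\mathbf{u}\mapsto\tanh(\mathbf{u}),
\end{equation*}
where $\mathbf{D}_r=\mathrm{diag}(\mathbf{r}_t)$, so that $\mathbf{r}_t\odot\mathbf{h}=\mathbf{D}_r\mathbf{h}$. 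I will bound the Lipschitz constant of each factor in the Euclidean norm and then use the fact that Lipschitz constants multiply under composition.

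First, $\|\mathbf{D}_r\|_2=\max_i r_{t,i}<1$, because the entries of $\mathbf{r}_t$ are values of the logistic sigmoid and therefore lie in $(0,1)$. Second, the affine map has Lipschitz constant $\|\mathbf{U}_h\|_2\le\rho_h$. The input and bias terms cancel in differences, which is why fixing $\mathbf{x}_t$ makes them irrelevant. Third, the elementwise $\tanh$ is $1$-Lipschitz in $\|\cdot\|_2$. For this step I note that $|\tanh a-\tanh b|\le|a-b|$ for scalars, by the mean value theorem with $0<\tanh'\le1$. Squaring and summing over components gives $\|\tanh(\mathbf{u})-\tanh(\mathbf{u}')\|_2\le\|\mathbf{u}-\mathbf{u}'\|_2$.

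Chaining the three bounds, for any $\mathbf{h},\mathbf{h}'$,
\begin{equation*}
\|\tilde{\mathbf{h}}(\mathbf{h})-\tilde{\mathbf{h}}(\mathbf{h}')\|_2\le\|\mathbf{U}_h\|_2\,\|\mathbf{D}_r\|_2\,\|\mathbf{h}-\mathbf{h}'\|_2\le\rho_h\|\mathbf{h}-\mathbf{h}'\|_2 .
\end{equation*}
To justify $\|\mathbf{U}_h\|_2\le\rho_h$, I appeal to the projection~\eqref{eq:uh_rescaling}. If $\|\widehat{\mathbf{U}}_h\|_2\le\rho_h$, the scaling factor is $1$ and the matrix is unchanged. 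Otherwise, homogeneity of the norm gives $\|\mathbf{U}_h\|_2=\rho_h\|\widehat{\mathbf{U}}_h\|_2/\|\widehat{\mathbf{U}}_h\|_2=\rho_h$.

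For the final sentence of the proposition, I observe that~\eqref{eq:uh_rescaling} only multiplies an existing matrix by a scalar. It therefore introduces no new parameters, gates or operations into~\eqref{eq:lGRU}, so the inference-time architecture is unchanged.

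There is no real obstacle here. The one point needing care is scope: the bound holds only with $\mathbf{r}_t$ frozen, since in reality $\mathbf{r}_t$ depends on $\mathbf{h}_{t-1}$ through $\mathbf{U}_r$. I will state explicitly that the claim concerns this conditional pathway only, in line with the paper's caveat, and does not establish contraction of the full candidate-state mapping or of the full hidden-state transition.
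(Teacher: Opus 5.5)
Your proposal is correct and follows essentially the same argument as the paper: the candidate state is factored through $\operatorname{diag}(\mathbf{r}_t)$, $\mathbf{U}_h$ and the element-wise $\tanh$, and $\tanh$ is shown to be $1$-Lipschitz via the mean-value theorem. The only difference is cosmetic. The paper bounds the Jacobian $\|\partial\widetilde{\mathbf{h}}_t/\partial\mathbf{h}_{t-1}\|_2$ using the chain rule and submultiplicativity, whereas you chain the Lipschitz bounds directly on differences, which spares you the implicit step from a pointwise Jacobian bound to a global Lipschitz constant.
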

\begin{proof}
See Appendix~\ref{appendix:A}.
\end{proof}

The proposed projection introduces an explicit structural bound on one of the principal recurrent pathways. Stronger guarantees for the complete candidate-state and hidden-state mappings require additional conditions on the reset- and update-gate recurrent matrices, as discussed next.

% ---------------------------------------------------
% DCL-GRU
% ---------------------------------------------------
\subsection{Doubly Constrained Lightweight GRU (\acronymB)}
\label{subsec:contractive_lgru}

In a standard GRU, $\mathbf{r}_t$ depends on $\mathbf{h}_{t-1}$ through~\eqref{eq:rt}. Consequently, the constraint on $\mathbf{U}_h$ controls only the direct recurrent gain of the candidate-state branch, rather than providing a stand-alone guarantee of contraction of the complete mapping $\mathbf{h}_{t-1}\mapsto\widetilde{\mathbf{h}}_t$. Contraction of the complete hidden-state transition additionally depends on the update-gate dynamics, and the dependence of the reset gate on $\mathbf{h}_{t-1}$ introduces an additional term involving $\mathbf{U}_r$.

To obtain a sufficient condition for contraction of the complete candidate-state mapping, we enforce spectral-norm bounds on both $\mathbf{U}_h$ and $\mathbf{U}_r$:
\begin{align}
\left\|\mathbf{U}_h\right\|_2 &\le \rho_h, \label{eq:joint_spectral_constraints_h}\\
\left\|\mathbf{U}_r\right\|_2 &\le \rho_r, \label{eq:joint_spectral_constraints_r}
\end{align}
where $\rho_h>0$ and $\rho_r\ge 0$. These constraints are imposed after training through spectral projections. Let $\widehat{\mathbf{U}}_h$ and $\widehat{\mathbf{U}}_r$ denote the matrices obtained through unconstrained optimisation. The deployed matrices are defined as
\begin{align}
\mathbf{U}_h
&=
\frac{\rho_h}
{\max\!\left(
\left\|\widehat{\mathbf{U}}_h\right\|_2,\,
\rho_h
\right)}
\widehat{\mathbf{U}}_h,
\label{eq:Uh_projection}
\\
\mathbf{U}_r
&=
\frac{\rho_r}
{\max\!\left(
\left\|\widehat{\mathbf{U}}_r\right\|_2,\,
\rho_r
\right)}
\widehat{\mathbf{U}}_r.
\label{eq:Ur_projection}
\end{align}

\begin{proposition}[Contraction of the complete candidate-state mapping]
\label{prop:complete_candidate_contraction}
Consider the GRU candidate-state mapping~\eqref{eq:candidate_state_contractivity}, where $\mathbf{r}_t$ is given by~\eqref{eq:rt}. Assume that $\|\mathbf{h}_{t-1}\|_\infty\le 1$ and that the deployed recurrent matrices satisfy $\|\mathbf{U}_h\|_2 \le \rho_h$ and $\|\mathbf{U}_r\|_2 \le \rho_r$. If, for some $\delta \in (0,1)$,
\begin{equation}
\rho_h
\left(
1+\frac{\rho_r}{4}
\right)
\leq 1-\delta,
\label{eq:dcl_contraction_condition}
\end{equation}
then, for every fixed input $\mathbf{x}_t$, the complete candidate-state mapping $\mathbf{h}_{t-1}\mapsto\widetilde{\mathbf{h}}_t$ is contractive in the Euclidean norm with contraction factor at most $1-\delta$. The post-training projections in~\eqref{eq:Uh_projection} and~\eqref{eq:Ur_projection} enforce the required spectral-norm bounds without modifying the inference-time architecture.
\end{proposition}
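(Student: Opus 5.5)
Fix the input $\mathbf{x}_t$ and take two states $\mathbf{h},\mathbf{h}'$ with $\|\mathbf{h}\|_\infty,\|\mathbf{h}'\|_\infty\le 1$. Write $\mathbf{r}=\sigma(\mathbf{W}_r\tilde{\mathbf{X}}_t+\mathbf{U}_r\mathbf{h}+\mathbf{b}_r)$ and $\mathbf{r}'$ for the corresponding gate at $\mathbf{h}'$. We bound $\|\widetilde{\mathbf{h}}-\widetilde{\mathbf{h}}'\|_2$ directly rather than through a Jacobian; this avoids needing convexity of the domain beyond the $\infty$-ball. Since $\tanh$ acts componentwise and is $1$-Lipschitz, and the bias and input terms cancel, the first step gives
\begin{equation*}
\|\widetilde{\mathbf{h}}-\widetilde{\mathbf{h}}'\|_2 \le \|\mathbf{U}_h\|_2\,\|\mathbf{r}\odot\mathbf{h}-\mathbf{r}'\odot\mathbf{h}'\|_2 \le \rho_h\,\|\mathbf{r}\odot\mathbf{h}-\mathbf{r}'\odot\mathbf{h}'\|_2 .
\end{equation*}

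Next we split the Hadamard difference by adding and subtracting $\mathbf{r}\odot\mathbf{h}'$:
\begin{equation*}
\mathbf{r}\odot\mathbf{h}-\mathbf{r}'\odot\mathbf{h}' = \mathbf{r}\odot(\mathbf{h}-\mathbf{h}') + (\mathbf{r}-\mathbf{r}')\odot\mathbf{h}' .
\end{equation*}
For the first term we use that the entries of $\mathbf{r}$ lie in $(0,1)$, which gives the bound $\|\mathbf{h}-\mathbf{h}'\|_2$. This is exactly the conditional bound of Proposition~\ref{prop:acronym_spectral_bound}. For the second term we use $\|\mathbf{h}'\|_\infty\le 1$, so $\|(\mathbf{r}-\mathbf{r}')\odot\mathbf{h}'\|_2\le\|\mathbf{r}-\mathbf{r}'\|_2$. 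Since the sigmoid is componentwise $\tfrac14$-Lipschitz ($\sigma'\le 1/4$),
\begin{equation*}
\|\mathbf{r}-\mathbf{r}'\|_2\le \tfrac14\|\mathbf{U}_r(\mathbf{h}-\mathbf{h}')\|_2\le \tfrac{\rho_r}{4}\|\mathbf{h}-\mathbf{h}'\|_2 .
\end{equation*}
Combining the pieces gives $\|\widetilde{\mathbf{h}}-\widetilde{\mathbf{h}}'\|_2\le \rho_h(1+\rho_r/4)\|\mathbf{h}-\mathbf{h}'\|_2\le(1-\delta)\|\mathbf{h}-\mathbf{h}'\|_2$ by~\eqref{eq:dcl_contraction_condition}.

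The main obstacle is the second, reset-gate-induced term. It is the only place where the assumption $\|\mathbf{h}_{t-1}\|_\infty\le 1$ enters, and we must apply it to whichever argument multiplies $\mathbf{r}-\mathbf{r}'$ in the split. The splitting order is chosen so that this is $\mathbf{h}'$, which lies in the assumed set, so the bound applies to every pair in the domain. It is also worth noting that this assumption is natural: every $\mathbf{h}_t$ is a convex combination of $\mathbf{h}_{t-1}$ and $\widetilde{\mathbf{h}}_t\in(-1,1)^H$, so the $\infty$-unit ball is forward invariant whenever $\mathbf{h}_0$ starts in it.

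Finally, to show the projections enforce the hypotheses, we argue as follows. For~\eqref{eq:Uh_projection}, if $\|\widehat{\mathbf{U}}_h\|_2\le\rho_h$ the scale factor is $1$; otherwise $\|\mathbf{U}_h\|_2=\rho_h\|\widehat{\mathbf{U}}_h\|_2/\|\widehat{\mathbf{U}}_h\|_2=\rho_h$. The same holds for $\mathbf{U}_r$ in~\eqref{eq:Ur_projection}, with the degenerate case $\rho_r=0$ interpreted as $\mathbf{U}_r=\mathbf{0}$. In both cases only the existing weights are rescaled, so the inference-time structure is unchanged.
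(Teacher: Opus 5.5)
Your argument is correct and reaches the same constant $\rho_h(1+\rho_r/4)$ as the paper, but by a derivative-free route.

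The paper's proof (Appendix~B) works with the Jacobian. It computes $\partial(\mathbf{r}_t\odot\mathbf{h}_{t-1})/\partial\mathbf{h}_{t-1}=\operatorname{diag}(\mathbf{r}_t)+\operatorname{diag}(\mathbf{h}_{t-1})\mathbf{D}_{r,t}\mathbf{U}_r$. It then bounds $\|\mathbf{J}_{\widetilde{h},t}\|_2\le\rho_h(1+\rho_r/4)$ at every point of the $\infty$-ball and concludes contraction on the ball. That last step tacitly uses the mean-value inequality along segments, which is legitimate because the ball is convex.

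Your add-and-subtract split $\mathbf{r}\odot(\mathbf{h}-\mathbf{h}')+(\mathbf{r}-\mathbf{r}')\odot\mathbf{h}'$ is the finite-difference counterpart of that product rule. The same ingredients close the estimate: $1$-Lipschitz $\tanh$, $\mathbf{r}\in(0,1)^H$, $\sigma'\le 1/4$, and the $\infty$-norm bound on one state.

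\textbf{What your version buys.} It removes the implicit integration step. Also, since the $\infty$-bound is used only on $\mathbf{h}'$, it proves the inequality whenever just one of the two points lies in the ball. So the real gain is not avoiding convexity, as you suggest (the Jacobian route is safe there, since the ball is convex), but needing the constraint at only one endpoint.

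\textbf{What the paper's version buys.} It gives a pointwise Jacobian bound. That is the natural object to combine with the update-gate and direct-memory Jacobians if one later attempts the full hidden-state analysis that Remark~\ref{rem:candidate_not_full_gru} leaves open.

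You also verify the projection clause of the proposition explicitly, including the degenerate case $\rho_r=0$; the paper's appendix leaves this to the main-text description. Conversely, the paper proves forward invariance of the $\infty$-ball by induction inside the proof. You treat this only as a side remark, which is fine because the proposition assumes $\|\mathbf{h}_{t-1}\|_\infty\le 1$ as a hypothesis.
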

\begin{proof}
See Appendix~\ref{appendix:B}.
\end{proof}
The condition in Proposition~V.4 is sufficient but is not claimed to be necessary. Consequently, candidate-state
contraction may still arise for parameter configurations that do not satisfy~(20). Characterising the tightness of the bound
and deriving less conservative contraction conditions remain important directions for future research.
\begin{remark}
\label{rem:candidate_not_full_gru}
Proposition~\ref{prop:complete_candidate_contraction} establishes contraction of the complete candidate-state mapping, including the dependence of the reset gate on the previous hidden state. It does not, by itself, establish contraction of the complete GRU hidden-state transition, because the latter additionally involves the update gate $\mathbf{z}_t$, its recurrent matrix $\mathbf{U}_z$, and the direct memory pathway through $\mathbf{h}_{t-1}$.
\end{remark}
Consequently, empirical evaluation remains necessary to assess the practical impact of the proposed constraints on recursive prediction behaviour.
\subsubsection{Computational Considerations}

DCL-GRU preserves the inference-time computational structure of L-GRU. For the main experiments, the only additional deployment-stage operations are the spectral projections of $\mathbf{U}_h$ and $\mathbf{U}_r$ in (18) and (19), respectively. These projections are performed outside the inference loop and introduce no additional trainable parameters, recurrent states, gates, or per-step matrix multiplications. Consequently, the parameter count and inference-time arithmetic complexity remain identical to those of L-GRU with the same hidden dimension.

\begin{remark}
SA-GRU and DCL-GRU provide explicit structural control of selected candidate-state recurrent pathways without modifying the inference-time architecture. SA-GRU bounds the conditional candidate-state recurrent gain, whereas DCL-GRU satisfies a sufficient condition for contraction of the complete candidate-state mapping. Neither condition alone establishes contraction of the complete GRU hidden-state transition.
\end{remark}

\subsection{Training Objective and Constraint Enforcement}
\label{subsec:training_contractivelgru}

The recurrent models are trained to predict the next normalised noisy channel observation from a window of preceding noisy observations. Let $\mathbf{y}_t$ denote the normalised observed feature vector. The training objective is
\begin{equation}
\mathcal{L}(\boldsymbol{\theta})
=
\frac{1}{T}
\sum_{t=1}^{T}
\left\|
\widehat{\mathbf{y}}_t-\mathbf{y}_t
\right\|_2^2,
\label{eq:training_objective}
\end{equation}
where $\boldsymbol{\theta}$ collects the trainable recurrent and output-layer parameters.

For the main one-step prediction and Bayesian-optimisation experiments, all recurrent parameters are first learned using unconstrained stochastic gradient-based optimisation, such as Adam~\cite{adam}. After training has converged, SA-GRU is obtained by projecting the candidate-state recurrent matrix $\mathbf{U}_h$ onto the spectral-norm constraint set defined in~\eqref{eq:uh_rescaling}. DCL-GRU additionally projects $\mathbf{U}_h$ and the reset-gate recurrent matrix $\mathbf{U}_r$ using the projections in (18) and (19), respectively. These operations ensure that SA-GRU satisfies its candidate recurrent-gain bound and that DCL-GRU satisfies the sufficient candidate-state contraction condition in (20). Because the projections are performed after unconstrained training and outside the inference loop, they introduce no additional backpropagation cost, trainable parameters, recurrent operations, or per-step inference overhead.

A different constraint-enforcement protocol is used only in the corrupted recursive-rollout experiment of Section~\ref{sec:corrupted_rollout}. In that experiment, all three variants are initialised from the same pretrained L-GRU parameters to provide a controlled comparison. The SA-GRU and DCL-GRU copies are first projected onto their respective constraint sets and are subsequently fine-tuned using a reduced learning rate. The relevant projection is reapplied after every optimiser update, allowing the remaining model parameters to adapt to the constrained recurrent matrices while continuously maintaining the prescribed spectral bounds. This experiment-specific fine-tuning procedure is not used in the main one-step prediction or Bayesian-optimisation experiments.

\subsection{Comparison to CNN and LSTM Architectures}

CNNs, LSTMs, and GRUs represent three distinct modelling paradigms with different inductive biases and computational characteristics. CNNs are effective at extracting local feature patterns but lack an intrinsic recurrent state for causal channel prediction unless multiple temporal observations are explicitly stacked at the input. LSTMs provide strong sequential modelling capacity but incur higher parameter, memory, and computational costs because of their multiple gates and separate cell state. GRUs provide an intermediate design by retaining recurrent temporal memory while using a simpler gated architecture. The proposed L-GRU, SA-GRU, and DCL-GRU variants further retain a single recurrent layer and introduce no additional inference-time gates or states, making them suitable for low-latency and hardware-constrained physical-layer prediction.

\begin{table*}[t]
\centering
\caption{Qualitative comparison of the considered channel-prediction architectures.}
\label{tab:architecture_comparison}
\scriptsize
\setlength{\tabcolsep}{5pt}
\begin{tabular}{lccccc}
\toprule
Property & CNN & LSTM & L-GRU & SA-GRU & DCL-GRU \\
\midrule
Intrinsic temporal state & No & Cell and hidden & Hidden & Hidden & Hidden \\
Streaming suitability & Moderate & Moderate & High & High & High \\
Recurrent depth used & -- & 1 & 1 & 1 & 1 \\
Explicit recurrent spectral bound & No & No & No & $\mathbf{U}_h$ & $\mathbf{U}_h,\mathbf{U}_r$ \\
Candidate-state guarantee & No & No & No & Conditional pathway bound & Complete mapping contraction \\
Complete hidden-state contraction & No & No & No & No & No \\
Inference overhead from constraint & -- & -- & -- & None & None \\
\bottomrule
\end{tabular}
\end{table*}

% =================================
% Bayesian Hyperparameter Optimisation
% =================================
\section{Bayesian Hyperparameter Optimisation}
\label{sec:bayesian}

In addition to the trainable parameters $\boldsymbol{\theta}$ learned by gradient descent, the performance of the GRU-based channel predictor depends critically on a set of hyperparameters $\boldsymbol{\phi}$ that define both the model architecture and the training procedure:
\begin{equation}
	\boldsymbol{\phi} = \{ H, \eta, p, B, T_\text{seq} \},
\end{equation}
where $H$ is the hidden-state dimension, $\eta$ is the learning rate, $p$ is the dropout probability, $B$ is the batch size, and $T_\text{seq}$ is the sequence length. Unlike $\boldsymbol{\theta}$, these hyperparameters are not optimised directly through backpropagation; instead, they are selected externally by minimising the validation NMSE. The recurrent depth is fixed by design in order to isolate the effect of the proposed lightweight and constrained recurrent cells. The lightweight recurrent variants, the LSTM baseline, and the single-layer GRU reference use one recurrent layer, while a 5-layer GRU is included as a deeper high-capacity baseline. The remaining hyperparameters, including hidden dimension, learning rate, dropout probability, batch size, and sequence length, are selected by Bayesian optimisation.

Hyperparameter optimisation is inherently computationally intensive, requiring repeated training--evaluation cycles that can take hours or days for deep architectures. Exhaustive grid or random search strategies therefore rapidly become infeasible in high-dimensional hyperparameter spaces.

\subsection{Bayesian Optimisation Framework}

Bayesian Optimisation (BO) provides a data-efficient framework for finding the optimal configuration $\boldsymbol{\phi}^*$ that minimises a validation loss $\mathcal{J}(\boldsymbol{\phi})$.\footnote{The validation loss measures how accurately the GRU predicts unseen channel sequences. A lower validation loss indicates better generalisation, meaning the model can reliably forecast future channel conditions it has not encountered during training.} The function $\mathcal{J}(\boldsymbol{\phi})$ is treated as a black-box, expensive-to-evaluate objective, and Bayesian optimisation seeks to minimise it by systematically incorporating information from past evaluations.

BO is typically realised by Sequential Model-Based Optimisation (SMBO) methods. At each iteration, a surrogate model (described in Section~\ref{subsec:surrogate}) is updated using the most recent data, and an acquisition function (described in Section~\ref{subsec:acquisitionfunction}) is optimised to balance exploration of uncertain regions with exploitation of known high-performing configurations. This process continues until the predefined trial budget is exhausted, yielding the best observed hyperparameter configuration under the available computational budget.

\subsection{Surrogate Probability Model of the Objective Function}
\label{subsec:surrogate}

Rather than treating each hyperparameter configuration independently, BO constructs a \emph{probabilistic surrogate model} that captures the relationship between hyperparameters and corresponding objective function values:
\begin{equation}
	p(\text{score} \mid \text{hyperparameters}).
\end{equation}
By explicitly modelling uncertainty, the surrogate guides the search toward regions of the hyperparameter space likely to yield improved performance, while avoiding unnecessary evaluations in less promising areas.

Let $\mathcal{D}_t = \{(\boldsymbol{\phi}_i, \mathcal{J}(\boldsymbol{\phi}_i))\}_{i=1}^t$ denote the set of $t$ evaluated configurations and their corresponding validation losses. A classical choice for the surrogate model is a \emph{Gaussian Process} (GP), which defines a distribution over functions,
$\mathcal{J}(\boldsymbol{\phi}) \sim \mathcal{GP}\!\big(m(\boldsymbol{\phi}), k(\boldsymbol{\phi}, \boldsymbol{\phi}')\big)$,
where $m(\cdot)$ is the mean function and $k(\cdot,\cdot)$ is a covariance kernel. Conditioning on $\mathcal{D}_t$, the GP yields a posterior predictive distribution for any candidate $\boldsymbol{\phi}_*$, i.e.,  
$p\big(\mathcal{J}(\boldsymbol{\phi}_*) \mid \mathcal{D}_t\big)
	= \mathcal{N}\!\big(\mu_t(\boldsymbol{\phi}_*), \sigma_t^2(\boldsymbol{\phi}_*)\big)$,
where $\mu_t(\boldsymbol{\phi}_*)$ is the predicted validation loss and $\sigma_t^2(\boldsymbol{\phi}_*)$ quantifies the associated uncertainty. This probabilistic representation enables BO to balance \emph{exploration} (sampling in regions of high uncertainty) and \emph{exploitation} (refining the search around configurations predicted to yield low validation loss).

\subsection{Acquisition Function, Surrogate Update, and Optuna TPE Implementation}
\label{subsec:acquisitionfunction}

While GP-based BO provides a convenient theoretical framework, its computational complexity scales cubically with the number of observations, which can become prohibitive as the number of trials grows. In this work, BO is implemented using the \emph{Optuna} framework~\cite{optuna_2019}, which adopts the \emph{Tree-structured Parzen Estimator} (TPE) as its default surrogate model. Instead of modelling the objective function directly, TPE models the inverse conditional density $p(\boldsymbol{\phi}|\mathcal{J})$ by constructing two density estimators: one for configurations associated with low objective values and one for the remainder. Specifically, given a threshold $\gamma$, TPE partitions the observations into $p(\boldsymbol{\phi}|\mathcal{J} < \gamma)$ and $p(\boldsymbol{\phi}|\mathcal{J} \geq \gamma)$, and selects new candidates by maximising the ratio between these densities. This formulation is closely related to maximising Expected Improvement, while enabling efficient handling of mixed, conditional, and high-dimensional hyperparameter spaces.

At each optimisation trial, Optuna proposes a new hyperparameter configuration $\boldsymbol{\phi}_{t+1}$ based on the TPE surrogate and the accumulated dataset $\mathcal{D}_t$. The GRU model is then trained with $\boldsymbol{\phi}_{t+1}$, and the resulting validation loss $\mathcal{J}(\boldsymbol{\phi}_{t+1})$ is observed. This new observation is appended to the dataset, yielding $\mathcal{D}_{t+1}$, and the surrogate densities are updated accordingly to guide subsequent trials.

\begin{remark}
A key motivation for adopting BO with a TPE surrogate is the high computational cost of each objective function evaluation: training a GRU-based channel predictor over multiple epochs, with complexity scaling with the hidden-state dimension, sequence length, and number of training samples. By concentrating subsequent evaluations in regions associated with favourable validation losses, the Optuna--TPE framework reduces the need for exhaustive enumeration of the hyperparameter space, enabling efficient tuning under strict time and resource constraints.
\end{remark}

% =================================
% Proposed Pipeline
% =================================
\section{Proposed Pipeline}
\label{sec:prob.sol}

This section outlines the proposed pipeline for causal channel prediction. The complete workflow, illustrated in Fig.~\ref{fig:bdiagram}, encompasses \emph{(i)} channel data generation using the CDL simulator, \emph{(ii)} data preprocessing, \emph{(iii)} model training with stochastic gradient optimisation, and \emph{(iv)} Bayesian hyperparameter tuning via the TPE implemented in Optuna~\cite{optuna_2019}. Each stage is designed to ensure efficient, robust, and adaptive prediction of time-varying wireless channels.

\begin{enumerate}
	\item The pipeline begins with the generation of complex-valued, time-varying channel coefficients using the 3GPP-compliant CDL model.

	\item These coefficients are transformed into frequency-domain representations and preprocessed into real-valued tensors, which are then batched for neural network input.

	\item The selected single-layer recurrent predictor operates causally and recursively to forecast the next noisy channel observation. Model training follows the objective defined in~\eqref{eq:training_objective}.

	\item Bayesian Optimisation with Optuna's TPE models two probability densities,
	      $l(\boldsymbol{\phi}) = p(\boldsymbol{\phi} | \mathcal{J}(\boldsymbol{\phi}) < \mathcal{J}^*)$
	      and $g(\boldsymbol{\phi}) = p(\boldsymbol{\phi} | \mathcal{J}(\boldsymbol{\phi}) \geq \mathcal{J}^*)$,
	      where $\mathcal{J}^*$ is a loss quantile. The next candidate configuration is selected by maximising the likelihood ratio:
	      \begin{equation}
		      \boldsymbol{\phi}_{t+1} = \arg\max_{\boldsymbol{\phi}} \frac{l(\boldsymbol{\phi})}{g(\boldsymbol{\phi})}.
	      \end{equation}
	      This approach balances exploration and exploitation while supporting efficient sequential or parallel hyperparameter evaluation.
\end{enumerate}

\begin{figure}[t]
    \centering
	\includegraphics[width=0.85\columnwidth]{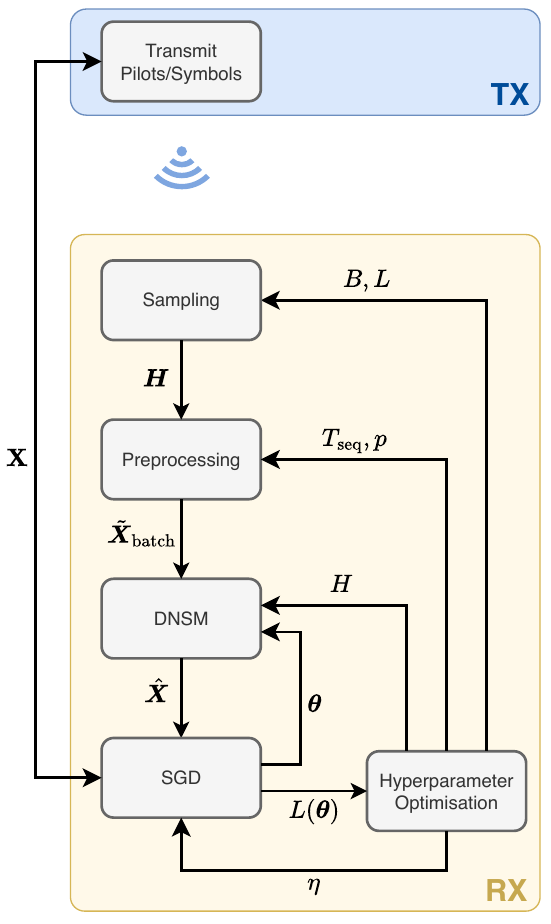}
	\caption{Block diagram of the channel prediction pipeline.}
	\label{fig:bdiagram}
\end{figure}

The integration of the \acronym\ predictor with Bayesian optimisation provides an automated, adaptive learning process that jointly enhances prediction accuracy and model efficiency. Using a single-layer \acronym\ reduces inference latency and memory consumption, making the model suitable for real-time and embedded deployments. The TPE-guided search reduces the need for exhaustive grid evaluation by allocating more trials to promising regions of the hyperparameter space. Overall, the proposed pipeline enables:
\begin{list4}
	\item \textbf{Causal, real-time prediction:} The \acronym\ updates its hidden state sequentially, allowing sample-by-sample prediction with no access to future observations.
	\item \textbf{Efficient optimisation:} Bayesian methods minimise training trials while maintaining high predictive performance.
	\item \textbf{Scalability and deployability:} The lightweight design supports implementation on GPUs, FPGAs, or embedded devices.
\end{list4}

% =================================
% Numerical Evaluations
% =================================
\section{Numerical Evaluations}
\label{sec:numerical_evaluations}

\subsection{System Model and Simulation Parameters}
\label{subsec:sim_params}

Time-varying wireless channels are generated using the 3GPP TR~38.901 CDL model. The resulting channel traces are processed into supervised learning sequences using a sliding window of length $T_{\text{seq}}$, and each recurrent model is trained to perform one-step-ahead prediction of the channel magnitude. Table~\ref{tab:sim_parameters} lists the key simulation parameters\footnote{To support reproducibility, the core project code is publicly available at \url{https://github.com/themistos/ChannelSeeksG}.}

\begin{table}[h]
\centering
\caption{Simulation parameters for the main one-step prediction and Bayesian-optimisation experiments. The corrupted recursive-rollout configuration is described separately in Section~\ref{sec:corrupted_rollout}.}
\label{tab:sim_parameters}
\begin{tabular}{ll}
\hline
\textbf{Parameter} & \textbf{Value} \\
\hline
Channel model & 3GPP TR 38.901 CDL-A \\
Carrier frequency $f_c$ & 3.5 GHz \\
Delay spread & 100 ns \\
User velocity & 30 m/s \\
Sampling frequency & 15 kHz \\
Transmission mode & Downlink \\
Antenna configuration & $2 \times 2$ MIMO, omnidirectional \\
Noise model & AWGN \\
Received SNR & $[-15,20]$ dB; $10$ dB for BO convergence \\
Total time samples $T_{\text{total}}$ & 100 \\
Sequence length $T_{\text{seq}}$ & Optimised (BO) \\
Batch size $B$ & Optimised (BO) \\
Recurrent layers & 1~$\forall$ models except for GRU-5L \\
Hidden dimension $H$ & Optimised (BO) \\
Dropout probability $p$ & Optimised (BO) \\
Optimiser & Adam \\
Training/validation obj. & Normalised Mean Squared Error (NMSE) \\
Hyperparameter search & Optuna (TPE, default settings) \\
BO trials & 100 for SNR sweep; 200 for convergence \\
\hline
\end{tabular}
\end{table}

Omnidirectional antenna elements are adopted to isolate the temporal prediction behaviour from beam-pattern and beam-alignment effects, allowing the evaluation to focus on the recurrent predictor's ability to learn mobility- and Doppler-induced temporal channel variations under noisy $2 \times 2$ MIMO conditions.

\subsection{Prediction Target and Time-Scale Interpretation}
\label{sec:prediction_target}

Although the general preprocessing pipeline supports real--imaginary frequency-domain channel features, the numerical evaluation focuses on a narrowband $2\times2$ MIMO magnitude-prediction setting. Let $\mathbf{H}(t)\in\mathbb{C}^{2\times2}$ denote the underlying MIMO channel matrix and let
\begin{equation}
\widetilde{\mathbf{H}}(t)
=
\mathbf{H}(t)+\mathbf{N}(t)
\end{equation}
denote its noisy observation at the received SNR associated with the experiment. The observed feature vector is formed by stacking the magnitudes of the four transmit--receive coefficients:
\begin{equation}
\mathbf{y}_t
=
\left[
\left|\widetilde{H}_{1,1}(t)\right|,
\left|\widetilde{H}_{1,2}(t)\right|,
\left|\widetilde{H}_{2,1}(t)\right|,
\left|\widetilde{H}_{2,2}(t)\right|
\right]^{\top}
\in\mathbb{R}^{4}.
\label{eq:observed_mimo_features}
\end{equation}
The predictor performs one-step-ahead prediction of the next noisy channel-magnitude observation:
\begin{equation}
\widehat{\mathbf{y}}_{t+1}
=
f_{\boldsymbol{\theta}}
\left(
\mathbf{y}_{t-T_{\mathrm{seq}}+1},
\ldots,
\mathbf{y}_{t}
\right).
\label{eq:one_step_prediction}
\end{equation}
For the received-SNR sweep, the observation SNR varies from $-15$ to $20$~dB, whereas $10$~dB is used for the fixed-SNR Bayesian-optimisation convergence study.

The CDL traces are sampled at $f_s = 15$\,kHz (numerology $\mu=0$ in 5G NR), corresponding to a temporal spacing of $T_s \approx 66.7\,\mu$s between consecutive channel snapshots. With carrier frequency $f_c = 3.5$\,GHz and user velocity $v = 30$\,m/s, the maximum Doppler frequency is
\begin{equation}
    f_D = \frac{v f_c}{c} \approx 350~\mathrm{Hz},
\end{equation}
where $c$ is the speed of light. The corresponding channel coherence time is approximately
\begin{equation}
    T_c \approx \frac{0.423}{f_D} \approx 1.2~\mathrm{ms},
\end{equation}
spanning approximately $T_c/T_s \approx 18$ sampled channel snapshots. This confirms that the sequence-length range explored in Bayesian optimisation is physically meaningful, covering a substantial fraction of the channel coherence interval under the considered mobility conditions. The slot duration under $\mu=0$ is 1\,ms, which is comparable to the estimated coherence time, so the channel may vary significantly between consecutive slots, making predictive channel tracking beneficial for adaptive transmission schemes.

\subsection{Prediction Accuracy versus SNR}
\label{subsec:nmse_snr_results}

We first evaluate the prediction accuracy of all architectures over a wide range of received SNR values. For each received-SNR value, $50$ independent Bayesian-optimisation runs are performed using a budget of $100$ trials per run. The best validation NMSE from each run is retained, and the reported curves show the sample mean and $95\%$ confidence interval of these $50$ best-run values.

\begin{figure}[!t]
    \centering
    \includegraphics[width=0.99\columnwidth]{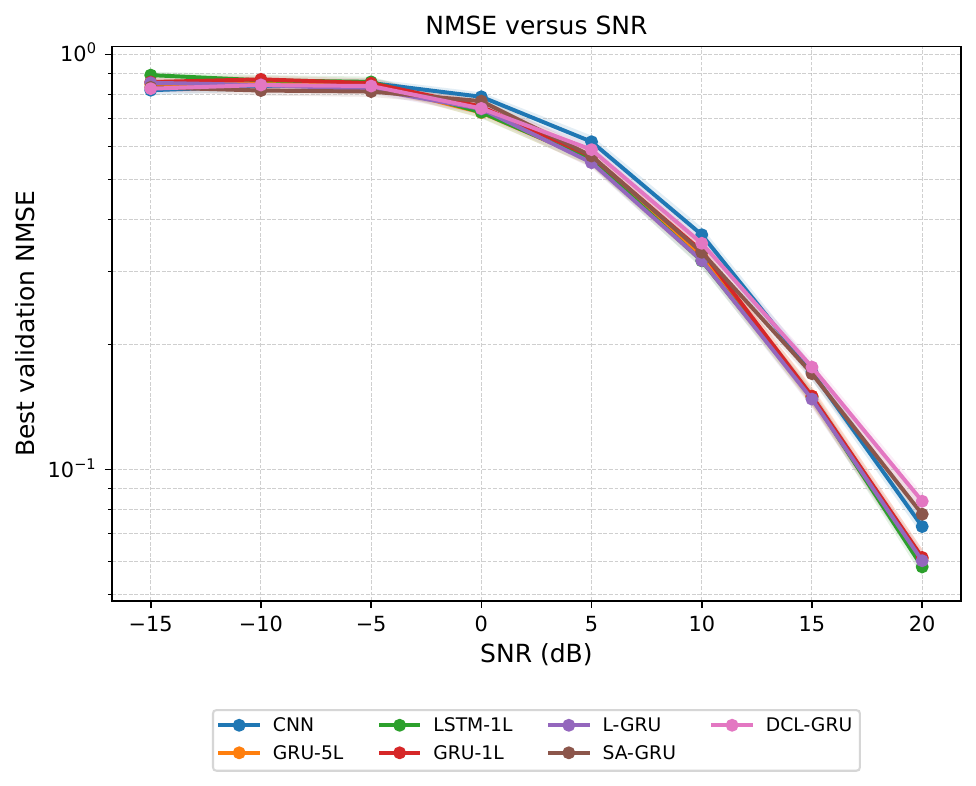}
    \caption{Best validation NMSE versus received SNR for all considered architectures. Each curve is averaged over 50 independent Bayesian-optimisation runs; shaded regions indicate 95\% confidence intervals.}
    \label{fig:nmse_vs_snr}
\end{figure}

Fig.~\ref{fig:nmse_vs_snr} shows that the NMSE decreases consistently as the received SNR increases, confirming that all models exploit the improved channel observation quality. In the low-SNR regime, the curves are close because the prediction task is noise limited. As the SNR increases, architectural differences become more visible: the unconstrained recurrent baselines---particularly LSTM, GRU-5L, and the unconstrained \acronym---achieve the lowest NMSE at high SNR. The proposed \acronymA\ and \acronymB\ remain competitive across the considered SNR range, although the imposed spectral constraints introduce a measurable accuracy penalty in the high-SNR regime. This reflects the intended accuracy--stability trade-off: the constrained variants sacrifice prediction accuracy in exchange for explicit control of the recurrent dynamics. At high received SNR, the observations contain less noise and the unconstrained recurrent models can exploit finer temporal variations through larger effective recurrent gains. The spectral bounds restrict this flexibility, so their accuracy cost becomes more visible once observation noise is no longer the dominant source of prediction error; at low SNR, this difference is largely masked by the noise-limited regime.

\begin{table}[!t]
    \centering
   \caption{Best validation NMSE at selected received-SNR values, reported as sample mean $\pm$ $95\%$ confidence-interval half-width over $50$ independent Bayesian-optimisation runs.}
    \label{tab:nmse_keypoints}
    \resizebox{\linewidth}{!}{%
        \begin{tabular}{lllll}
\toprule
Model & -15 dB & 0 dB & 10 dB & 20 dB \\
\midrule
CNN & 0.8205 ± 0.023 & 0.7912 ± 0.02 & 0.368 ± 0.015 & 0.07288 ± 0.0034 \\
GRU-5L & 0.8488 ± 0.027 & 0.7252 ± 0.021 & 0.3264 ± 0.013 & 0.06099 ± 0.0033 \\
LSTM-1L & 0.8924 ± 0.017 & 0.726 ± 0.024 & 0.3183 ± 0.013 & 0.05826 ± 0.0028 \\
GRU-1L & 0.8587 ± 0.027 & 0.7493 ± 0.024 & 0.3361 ± 0.012 & 0.06138 ± 0.003 \\
L-GRU & 0.8552 ± 0.02 & 0.7397 ± 0.023 & 0.3189 ± 0.011 & 0.06029 ± 0.003 \\
SA-GRU & 0.8339 ± 0.019 & 0.7725 ± 0.022 & 0.3338 ± 0.012 & 0.07803 ± 0.0041 \\
DCL-GRU & 0.8277 ± 0.019 & 0.7417 ± 0.022 & 0.3509 ± 0.011 & 0.08385 ± 0.0037 \\
\bottomrule
\end{tabular}

    }
\end{table}

Table~\ref{tab:nmse_keypoints} reports representative numerical values from the SNR sweep. At 10\,dB, the proposed constrained variants remain within the same NMSE range as the lightweight and recurrent baselines. At 20\,dB, the accuracy gap between the constrained and unconstrained models becomes clearer, indicating that the spectral constraints primarily affect performance in the high-SNR regime where observation noise is no longer the dominant limitation.

\subsection{Bayesian-Optimisation Convergence}
\label{subsec:bo_convergence_results}

We next evaluate how the Optuna trial budget affects the best validation NMSE. At a fixed received SNR of $10$~dB, $100$ independent Bayesian-optimisation runs are executed for $200$ trials each. For every run, the best-so-far validation NMSE is recorded at trial budgets of $50$, $100$, $150$, and $200$.

\begin{figure}[!t]
    \centering
    \includegraphics[width=0.99\columnwidth]{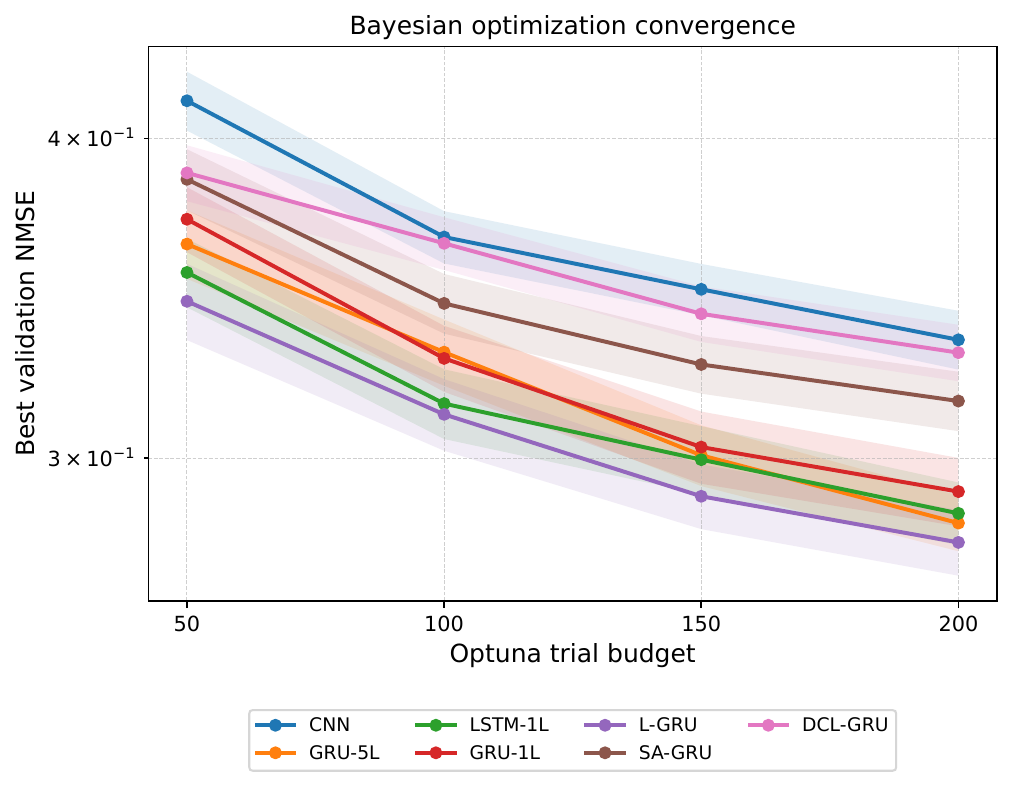}
    \caption{Bayesian-optimisation convergence at $10$~dB. Each point shows the mean best-so-far validation NMSE over $100$ independent runs, and the shaded region denotes the corresponding $95\%$ confidence interval.}
    \label{fig:bo_convergence}
\end{figure}

Fig.~\ref{fig:bo_convergence} shows that increasing the trial budget improves the best validation NMSE for all architectures. Most of the improvement occurs between 50 and 150 trials, after which gains become more gradual. The constrained SA-GRU and DCL-GRU exhibit smooth best-so-far convergence trends comparable to those of the unconstrained lightweight recurrent models. L-GRU obtains the lowest final NMSE in this fixed-SNR experiment, while \acronymB\ provides the strongest structural guarantee among the lightweight variants.

\subsection{Computational Runtime}
\label{subsec:runtime_results}

In addition to prediction accuracy, computational cost is a key metric for real-time PHY-layer deployment. We compare the cumulative optimisation time as a function of the BO trial budget and the average wall-clock time per SNR-sweep run.

\begin{figure}[!t]
    \centering
    \includegraphics[width=0.99\columnwidth]{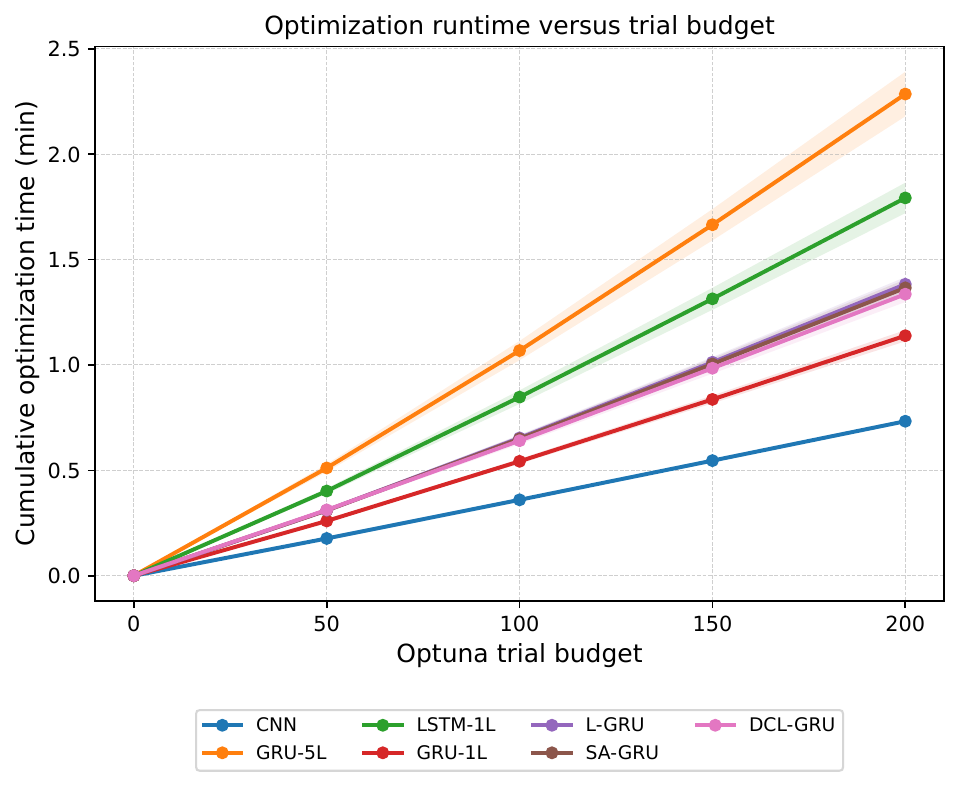}
    \caption{Cumulative optimisation time versus Optuna trial budget. Lightweight recurrent variants require substantially less optimisation time than the deeper GRU baseline.}
    \label{fig:runtime_vs_trials}
\end{figure}

Fig.~\ref{fig:runtime_vs_trials} shows that optimisation time grows approximately linearly with the number of Optuna trials, as expected since each trial requires a complete training and validation cycle. The GRU-5L baseline is the most computationally expensive architecture, whereas the lightweight recurrent variants exhibit substantially lower cumulative optimisation time.

\begin{figure}[!t]
    \centering
    \includegraphics[width=\linewidth]{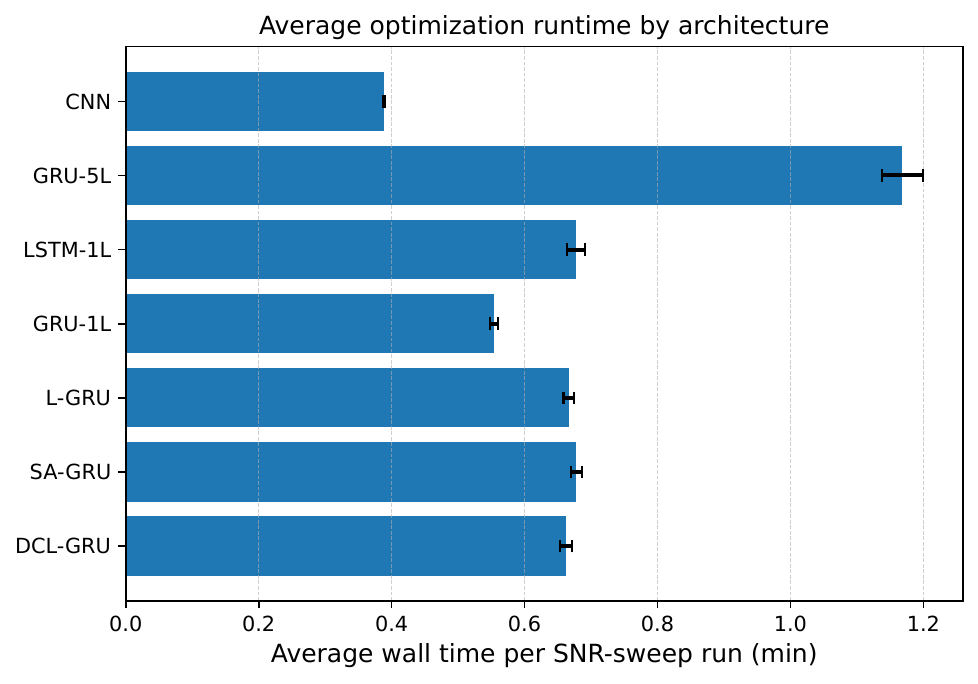}
    \caption{Average wall-clock optimisation time per SNR-sweep run for each architecture. The proposed constrained variants retain a runtime close to the unconstrained lightweight GRU, confirming that the post-training spectral projections introduce negligible overhead.}
    \label{fig:runtime_by_model}
\end{figure}

\begin{table}[!t]
\centering
\caption{Average optimisation runtime by architecture, reported as sample mean $\pm$ $95\%$ confidence-interval half-width over $400$ model--SNR optimisation runs. Speedup is measured relative to GRU-5L.}
\label{tab:runtime_by_model}
\scriptsize
\begin{tabular}{lccc}
\toprule
Model & Runs & Mean time (min) & Speedup \\
\midrule
CNN      & 400 & $0.389 \pm 0.0015$ & $3.00\times$ \\
GRU-5L   & 400 & $1.169 \pm 0.0307$ & $1.00\times$ \\
LSTM-1L  & 400 & $0.677 \pm 0.0132$ & $1.73\times$ \\
GRU-1L   & 400 & $0.554 \pm 0.0055$ & $2.11\times$ \\
L-GRU    & 400 & $0.667 \pm 0.0082$ & $1.75\times$ \\
SA-GRU   & 400 & $0.678 \pm 0.0082$ & $1.72\times$ \\
DCL-GRU  & 400 & $0.662 \pm 0.0088$ & $1.76\times$ \\
\bottomrule
\end{tabular}
\end{table}

Fig.~\ref{fig:runtime_by_model} and Table~\ref{tab:runtime_by_model} confirm that the proposed \acronymA\ and \acronymB\ have runtimes close to the unconstrained \acronym. The stability-aware and candidate-state contractive projections therefore introduce no meaningful optimisation-time penalty; the runtime of the constrained variants is dominated by the underlying recurrent training loop rather than by the one-time spectral projection.

\subsection{Projection and Contraction Audit}
\label{subsec:contractivity_audit_results}

We verify that the empirical results correspond to the deployed constrained models. For \acronymA, the audit checks whether the projected candidate recurrent matrix satisfies $\|\mathbf{U}_h\|_2\le\rho_h$. For \acronymB, the audit additionally checks $\|\mathbf{U}_r\|_2\le\rho_r$ and the sufficient candidate-state contraction condition
\begin{equation}
\rho_h\!\left(1+\frac{\rho_r}{4}\right)\le 1-\delta.
\end{equation}

\begin{table*}[!t]
\centering
\caption{Projection and contraction audit for the proposed constrained variants. The table verifies that all deployed constrained models satisfy the required spectral-norm bounds and, for DCL-GRU, the sufficient candidate-state contraction condition.}
\label{tab:contractivity_audit}
\scriptsize
\begin{tabular}{llcccccc}
\toprule
Experiment & Model 
& $U_h$ checks 
& max norm / bound for $\mathbf{U}_h$
& $U_r$ checks 
& max norm / bound for $\mathbf{U}_r$
& Condition checks 
& Violations \\
\midrule
Convergence & DCL-GRU & 20000 & $0.84/0.84$ & 20000 & $0.50/0.50$ & 20000 & 0 \\
Convergence & SA-GRU  & 20000 & $0.90/0.90$ & --    & --          & --    & 0 \\
SNR sweep   & DCL-GRU & 40000 & $0.84/0.84$ & 40000 & $0.50/0.50$ & 40000 & 0 \\
SNR sweep   & SA-GRU  & 40000 & $0.90/0.90$ & --    & --          & --    & 0 \\
\bottomrule
\end{tabular}
\end{table*}

Table~\ref{tab:contractivity_audit} shows that all audited constrained runs satisfy the required spectral-norm bounds, with zero violations. In particular, \acronymB\ satisfies the sufficient candidate-state contraction condition in every audited run. This confirms that the reported accuracy and runtime results correspond to models whose deployed recurrent matrices obey the intended structural constraints.

\subsection{Empirical Stability Under Corrupted Recursive Rollouts}
\label{sec:corrupted_rollout}

The one-step prediction results reported previously quantify model accuracy when predictions are formed from observed channel samples. However, one-step metrics do not fully reveal how recurrent predictors respond to temporarily corrupted observations or how the resulting perturbations propagate when predictions are recursively fed back over multiple time steps. To investigate this behaviour, we compare the unconstrained L-GRU with the SA-GRU and DCL-GRU variants under a temporary observation corruption followed by open-loop recursive prediction.

The purpose of this experiment is not to claim contraction of the complete GRU hidden-state transition. Instead, it evaluates whether the spectral constraints imposed on the candidate-state recurrent pathways reduce the empirical sensitivity of the recurrent state to corrupted inputs and quantifies the corresponding effect on recursive prediction accuracy.

\subsubsection{Evaluation Protocol}

All three variants are initialised from the same pretrained L-GRU parameters. A common base model is first trained for 15 epochs using 50 independent trajectories of 250 samples each. The trained model is then copied to form the L-GRU, SA-GRU, and DCL-GRU variants, and each copy is fine-tuned for 10 additional epochs using one tenth of the original learning rate. For SA-GRU and DCL-GRU, the corresponding spectral projection is applied before fine-tuning and reapplied after every optimiser update. This procedure allows the constrained models to adapt to the projected recurrent dynamics while continuously satisfying the prescribed bounds.

The shared architecture uses hidden-state dimension $H=240$, sequence length $T_{\mathrm{seq}}=13$, batch size 16, dropout probability $0.4787$, and initial learning rate $4.456\times10^{-3}$. The SA-GRU constraint is $\rho_h=0.90$. For DCL-GRU, the constraint parameters are $\rho_h=0.84$, $\rho_r=0.50$, and $\delta=0.05$, which satisfy
\begin{equation}
\rho_h\left(1+\frac{\rho_r}{4}\right)=0.945\leq 1-\delta=0.95.
\label{eq:rollout_dcl_condition}
\end{equation}

For each test run, a clean trajectory of $T_{\mathrm{test}}=200$ channel snapshots is generated using the same $2\times2$ MIMO CDL-A configuration described in Section~\ref{sec:numerical_evaluations}. Let $\mathbf{x}_t\in\mathbb{R}^{F}$, with $F=4$, denote the underlying clean channel-magnitude feature vector at time index $t$.

A baseline noisy observation sequence is generated according to
\begin{equation}
\mathbf{y}_t=\mathbf{x}_t+\mathbf{w}_t,
\label{eq:baseline_noisy_observation}
\end{equation}
where $\mathbf{w}_t$ denotes additive observation noise corresponding to an SNR of $\gamma_{\mathrm{o}}=10$~dB, matching the observation condition used during training.

An additional temporary feature-domain corruption is applied to the baseline noisy observations according to
\begin{equation}
\widetilde{\mathbf{y}}_t=
\begin{cases}
\mathbf{y}_t+\mathbf{n}_t, & t_{\mathrm{c}}\leq t<t_0,\\[1mm]
\mathbf{y}_t, & \text{otherwise},
\end{cases}
\label{eq:corrupted_input}
\end{equation}
where $t_0=t_{\mathrm{c}}+L_{\mathrm{c}}$, with $t_{\mathrm{c}}=40$ and $L_{\mathrm{c}}=10$. The additional corruption vector is generated as
\begin{equation}
\mathbf{n}_t\sim\mathcal{N}\left(\mathbf{0},\sigma_{\mathrm{c}}^2\mathbf{I}_{F}\right),
\qquad
\sigma_{\mathrm{c}}^2=P_{\mathrm{c}}10^{-\gamma_{\mathrm{c}}/10},
\label{eq:corruption_noise}
\end{equation}
where
\begin{equation}
P_{\mathrm{c}}=\frac{1}{F L_{\mathrm{c}}}\sum_{t=t_{\mathrm{c}}}^{t_0-1}\left\|\mathbf{x}_t\right\|_2^2
\label{eq:corruption_power}
\end{equation}
is the average clean feature power over the corrupted interval and $\gamma_{\mathrm{c}}=0$~dB is the additional corruption SNR.

The clean-control and corrupted observation sequences are defined as $\mathbf{u}^{(\mathrm{clean})}_t=\mathbf{y}_t$ and
$\mathbf{u}^{(\mathrm{corr})}_t=\widetilde{\mathbf{y}}_t$, respectively.

\begin{figure*}[t]
\centering
\includegraphics[width=\textwidth]{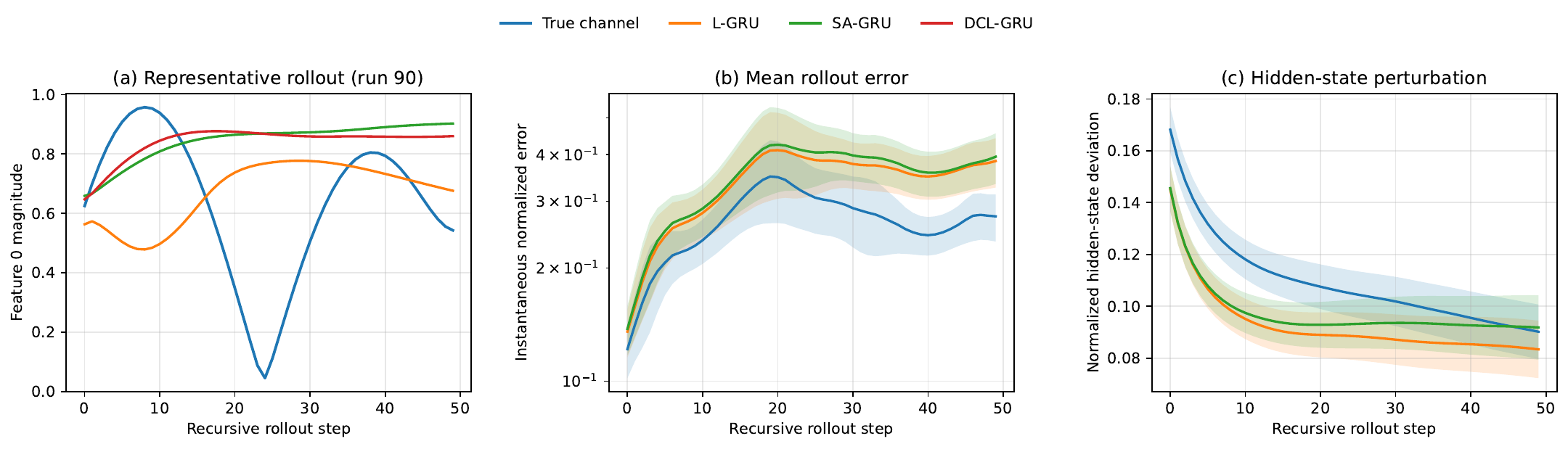}%
\caption{Empirical stability under an additional temporary observation corruption. Left: representative $50$-step corrupted recursive rollout. Centre: mean instantaneous normalised prediction error with $95\%$ confidence intervals over $100$ paired test trajectories. Right: normalised deviation between the hidden states generated by the corrupted and clean-control observation sequences. All models are evaluated using the same underlying channel trajectories, baseline-noise realisations, and corruption realisations.}
\label{fig:corrupted_rollout}
\end{figure*}

Both evaluations therefore contain the same baseline observation noise, while only the corrupted evaluation contains the additional temporary burst. The same underlying channel trajectory, baseline-noise realisation, and corruption realisation are used for all three models.

Both observation sequences are transformed using the same normalisation parameters computed from the training set. Recursive feedback is performed in the normalised feature domain used during training, while predictions are transformed back to the original feature domain before the reported metrics are calculated.

Let $q\in\{\mathrm{clean},\mathrm{corr}\}$ denote the clean-control and corrupted evaluations, respectively. For model $m$, the observation prefix is processed sequentially according to
\begin{equation}
\mathbf{s}^{(m,q)}_t=\mathcal{G}_{\theta_m}\left(\mathbf{u}^{(q)}_t,\mathbf{s}^{(m,q)}_{t-1}\right),
\qquad
t<t_0,
\label{eq:rollout_warmup}
\end{equation}
where $\mathcal{G}_{\theta_m}$ denotes the recurrent state-update map. The two evaluations use the same zero-state initialisation, but their hidden states can differ after the corrupted interval has been processed.

At $t_0=50$, external observations are removed and each model is operated recursively for $H_{\mathrm{roll}}=50$ steps. For $k=0,\ldots,H_{\mathrm{roll}}-1$, the model output is generated as
\begin{equation}
\widehat{\mathbf{x}}^{(m,q)}_{t_0+k}=\mathcal{R}_{\theta_m}\left(\mathbf{s}^{(m,q)}_{t_0+k-1}\right),
\label{eq:recursive_readout}
\end{equation}
where $\mathcal{R}_{\theta_m}$ denotes the output readout.

The recurrent state is subsequently updated according to
\begin{equation}
\mathbf{s}^{(m,q)}_{t_0+k}=\mathcal{G}_{\theta_m}\left(\widehat{\mathbf{x}}^{(m,q)}_{t_0+k},\mathbf{s}^{(m,q)}_{t_0+k-1}\right).
\label{eq:recursive_state}
\end{equation}

Hence, after $t_0$, each prediction is used as the next model input while the recurrent hidden state is propagated continuously.

The instantaneous normalised squared prediction error of the corrupted rollout is defined as
\begin{equation}
E_m(k)=
\frac{
\left\|
\widehat{\mathbf{x}}^{(m,\mathrm{corr})}_{t_0+k}
-
\mathbf{x}_{t_0+k}
\right\|_2^2
}{
\left\|\mathbf{x}_{t_0+k}\right\|_2^2+\varepsilon_E
}.
\label{eq:instantaneous_rollout_error}
\end{equation}

The aggregate rollout NMSE is defined as
\begin{equation}
\mathrm{NMSE}^{\mathrm{roll}}_m=
\frac{
\displaystyle\sum_{k=0}^{H_{\mathrm{roll}}-1}
\left\|
\widehat{\mathbf{x}}^{(m,\mathrm{corr})}_{t_0+k}
-
\mathbf{x}_{t_0+k}
\right\|_2^2
}{
\displaystyle\sum_{k=0}^{H_{\mathrm{roll}}-1}
\left\|\mathbf{x}_{t_0+k}\right\|_2^2+\varepsilon_E
}.
\label{eq:aggregate_rollout_nmse}
\end{equation}

To isolate the effect of the additional corruption from ordinary multi-step prediction error, the output deviation between the corrupted and clean-control rollouts is defined as
\begin{equation}
D_m(k)=
\frac{
\left\|
\widehat{\mathbf{x}}^{(m,\mathrm{corr})}_{t_0+k}
-
\widehat{\mathbf{x}}^{(m,\mathrm{clean})}_{t_0+k}
\right\|_2
}{
\left\|\mathbf{x}_{t_0+k}\right\|_2+\varepsilon_D
}.
\label{eq:rollout_output_deviation}
\end{equation}

The common ground-truth denominator permits direct comparison across models and avoids artificial amplification when the magnitude of a clean-control prediction approaches zero.

The normalised hidden-state deviation is defined as
\begin{equation}
\Delta_m(k)=
\frac{
\left\|
\mathbf{s}^{(m,\mathrm{corr})}_{t_0+k}
-
\mathbf{s}^{(m,\mathrm{clean})}_{t_0+k}
\right\|_2
}{
\sqrt{H_m}
},
\label{eq:hidden_state_deviation}
\end{equation}
where $H_m$ is the hidden-state dimension of model $m$.

In addition to the peak and terminal hidden-state deviations, the average hidden-state deviation is reported as
\begin{equation}
\overline{\Delta}_m
=
\frac{1}{H_{\mathrm{roll}}}
\sum_{k=0}^{H_{\mathrm{roll}}-1}
\Delta_m(k).
\label{eq:mean_hidden_state_deviation}
\end{equation}

For a quantity $z$ evaluated over $N_{\mathrm{run}}=100$ paired test trajectories, the reported $95\%$ confidence interval is
\begin{equation}
\operatorname{CI}_{95\%}(z)=
\overline{z}
\pm
t_{0.975,N_{\mathrm{run}}-1}
\frac{s_z}{\sqrt{N_{\mathrm{run}}}},
\label{eq:rollout_confidence_interval}
\end{equation}
where $\overline{z}$ and $s_z$ are the sample mean and sample standard deviation, respectively.

These confidence intervals quantify variability across channel trajectories, baseline-noise realisations, and corruption realisations conditional on the fixed trained models. They do not quantify variability across independent model-training repetitions.

\subsubsection{Results}

\begin{table*}[t]
\centering
\caption{Summary of the corrupted recursive-rollout experiment. Values are reported as the sample mean and $95\%$ confidence-interval half-width over $100$ paired test trajectories. Bold values indicate the smallest numerical mean for each metric.}
\label{tab:corrupted_rollout_summary}
\scriptsize
\setlength{\tabcolsep}{3.2pt}
\begin{tabular}{lcccccc}
\toprule
Model & Rollout NMSE & Peak output dev. & Terminal output dev. & Mean hidden dev. & Peak hidden dev. & Terminal hidden dev. \\
\midrule
L-GRU & $\mathbf{0.2415\pm0.0180}$ & $0.3711\pm0.0353$ & $\mathbf{0.2172\pm0.0364}$ & $0.1094\pm0.0083$ & $0.1706\pm0.0084$ & $0.0902\pm0.0105$ \\
SA-GRU & $0.3141\pm0.0310$ & $\mathbf{0.3535\pm0.0294}$ & $0.2310\pm0.0313$ & $\mathbf{0.0927\pm0.0086}$ & $\mathbf{0.1484\pm0.0088}$ & $\mathbf{0.0834\pm0.0111}$ \\
DCL-GRU & $0.3225\pm0.0331$ & $0.3813\pm0.0335$ & $0.2567\pm0.0367$ & $0.0975\pm0.0089$ & $0.1518\pm0.0092$ & $0.0918\pm0.0124$ \\
\bottomrule
\end{tabular}
\end{table*}

Figure~\ref{fig:corrupted_rollout} shows that the unconstrained L-GRU maintains the lowest instantaneous normalised prediction error over most of the recursive horizon. Consistently, Table~\ref{tab:corrupted_rollout_summary} reports an aggregate rollout NMSE of $0.2415\pm0.0180$ for L-GRU, compared with $0.3141\pm0.0310$ for SA-GRU and $0.3225\pm0.0331$ for DCL-GRU. Thus, the spectral constraints do not improve open-loop recursive prediction accuracy in this setting. Relative to L-GRU, the rollout NMSE increases by approximately $30.0\%$ for SA-GRU and $33.5\%$ for DCL-GRU.

In contrast, the constrained variants exhibit reduced propagation of the observation perturbation through their recurrent states. SA-GRU exhibits a numerically lower mean hidden-state deviation, decreasing from $0.1094\pm0.0083$ to $0.0927\pm0.0086$, corresponding to a reduction of approximately $15.3\%$. Its peak hidden-state deviation decreases from $0.1706\pm0.0084$ to $0.1484\pm0.0088$, corresponding to a reduction of approximately $13.0\%$. DCL-GRU also exhibits numerically lower mean and peak hidden-state deviations by approximately $10.9\%$ and $11.0\%$, respectively, relative to L-GRU.

SA-GRU also obtains the smallest peak output deviation, namely $0.3535\pm0.0294$, which represents a numerical reduction of approximately $4.7\%$ relative to L-GRU. Nevertheless, L-GRU achieves the smallest terminal output deviation, while the confidence intervals of the terminal hidden-state deviations overlap substantially across the three models. The terminal metrics should therefore not be interpreted as evidence of a clear ordering among the models without an additional paired statistical test.

Overall, the experiment reveals a stability--accuracy trade-off. Constraining the candidate-state recurrent pathways reduces the sensitivity of the internal recurrent state to a temporary observation corruption, but this improved state robustness is accompanied by higher recursive prediction NMSE. Under the considered configuration, SA-GRU provides the strongest empirical suppression of the hidden-state perturbation, whereas L-GRU preserves the highest open-loop prediction accuracy. DCL-GRU satisfies the sufficient candidate-state contraction condition, but its stronger constraint does not provide an additional empirical advantage over SA-GRU in this rollout experiment.

\subsection{Discussion}
\label{subsec:results_discussion}

The results show that the proposed \acronymA\ and \acronymB\ preserve competitive channel-prediction accuracy while introducing explicit spectral constraints on the recurrent dynamics. The unconstrained baselines achieve slightly lower NMSE in some high-SNR regimes, but they do not provide stability-aware or candidate-state contraction guarantees. In contrast, SA-GRU and DCL-GRU provide explicit structural guarantees on selected recurrent pathways, with DCL-GRU additionally satisfying the sufficient candidate-state contraction condition in all audited runs while maintaining a runtime close to the unconstrained lightweight GRU. The proposed constrained variants therefore offer a practical trade-off between prediction accuracy, computational efficiency, and theoretically controlled recurrent dynamics.

Interestingly, the stronger theoretical guarantee provided by \acronymB\ does not necessarily translate into superior empirical performance under the considered rollout configuration, illustrating that sufficient contraction guarantees and empirical rollout robustness need not coincide. This observation suggests that stronger structural constraints may introduce additional conservatism, highlighting a trade-off between theoretical guarantees and practical performance. A possible mechanism is that the additional bound on the reset-gate recurrent matrix in DCL-GRU limits how strongly the reset gate can adapt the candidate-state update to the recent hidden-state trajectory. Although this restriction is required by the sufficient contraction condition, it may also reduce the model's ability to apply trajectory-specific corrections after a corruption burst, explaining why the stronger DCL-GRU guarantee does not translate into lower empirical rollout error than SA-GRU.

The corrupted-rollout analysis was conducted under a single perturbation configuration, and a broader sensitivity analysis across multiple corruption scenarios, observation conditions, and contraction margins is left for future work.
% =================================
% Conclusions and Future Directions
% =================================
%\FloatBarrier
\section{Conclusions and Future Directions}
\label{sec:conclusions}

\subsection{Conclusions}

This paper presented a framework for causal one-step-ahead channel prediction using lightweight GRU-based recurrent architectures under noisy $2\times2$ MIMO CDL channels. In addition to the unconstrained lightweight GRU baseline, two constrained variants were investigated. The first, \acronymA, imposes a post-training spectral constraint on the candidate-state recurrent matrix, bounding the direct candidate recurrent gain. The second, \acronymB, additionally constrains the reset-gate recurrent matrix and satisfies a sufficient condition for contraction of the complete candidate-state mapping.

The numerical results show that the proposed constrained variants remain competitive with conventional CNN, GRU, LSTM, and lightweight recurrent baselines across a broad SNR range. The unconstrained models can achieve lower NMSE in some high-SNR settings, but the constrained variants provide explicit structural guarantees that the baselines lack. Runtime measurements further confirm that the post-training projections introduce negligible overhead relative to the unconstrained lightweight GRU, and that the lightweight recurrent variants are substantially faster to optimise than the deeper GRU baseline. The projection audit confirms that the required spectral constraints and candidate-state contraction condition are satisfied in all completed constrained runs.

The corrupted recursive-rollout experiment further demonstrated that the spectral constraints reduce the empirical propagation of observation perturbations through the recurrent state. SA-GRU exhibited mean and peak hidden-state deviations approximately $15.3\%$ and $13.0\%$ below those of L-GRU, respectively. This improved state robustness was accompanied by higher recursive rollout NMSE, demonstrating that the proposed spectral constraints introduce a measurable stability--accuracy trade-off rather than an across-the-board improvement in prediction performance.

\subsection{Future Directions}

Several natural extensions are identified. First, future work will consider wideband OFDM channels with multiple subcarriers, realistic pilot placement, and DMRS-based channel observations. Second, online and continual learning mechanisms may be combined with the proposed spectrally constrained recurrent structure to improve adaptation under long-term distribution shift while preserving the projected recurrent constraints. Third, model compression techniques such as pruning, low-precision quantisation, and FPGA-oriented implementation can be explored to further reduce memory footprint and inference latency. Finally, integrating the predictor into link adaptation, scheduling, or goal-oriented communication loops would enable evaluation of the system-level impact of stable channel prediction.

\appendices

% =================================
% Proof of Proposition 1
% =================================
\section{Proof of Proposition~\ref{prop:acronym_spectral_bound}}
\label{appendix:A}

The hyperbolic tangent is globally $1$-Lipschitz. For any scalar $a\in\mathbb{R}$,
\begin{equation}
\frac{\mathrm{d}}{\mathrm{d}a}\tanh(a) = 1-\tanh^2(a).
\end{equation}
Since $|\tanh(a)|<1$ for all finite $a$, we have $0 < 1-\tanh^2(a) \le 1$. By the mean-value theorem, for any $a,b\in\mathbb{R}$,
\begin{equation}
|\tanh(a)-\tanh(b)| \le |a-b|.
\end{equation}
When $\tanh(\cdot)$ is applied element-wise to vectors $\mathbf{a},\mathbf{b}\in\mathbb{R}^{H}$,
\begin{align}
\left\|\tanh(\mathbf{a}) - \tanh(\mathbf{b})\right\|_2^2
&= \sum_{i=1}^{H}|\tanh(a_i)-\tanh(b_i)|^2 \nonumber \\
&\le \|\mathbf{a}-\mathbf{b}\|_2^2,
\end{align}
so the element-wise hyperbolic tangent is globally $1$-Lipschitz in the Euclidean norm.

For a fixed input $\mathbf{x}_t$ and a fixed reset-gate realisation $\overline{\mathbf{r}}_t\in(0,1)^H$, the conditional candidate-state mapping is
\begin{equation}
\mathcal{F}_t(\mathbf{h})
= \tanh\!\left(\mathbf{W}_h\mathbf{x}_t
+ \mathbf{U}_h\operatorname{diag}(\overline{\mathbf{r}}_t)\mathbf{h}
+ \mathbf{b}_h\right).
\end{equation}
By the chain rule and the submultiplicative property of the spectral norm,
\begin{equation}
\left\|\frac{\partial\widetilde{\mathbf{h}}_t}{\partial\mathbf{h}_{t-1}}\right\|_2
\le
\underbrace{\|\mathbf{D}_{h,t}\|_2}_{\le\,1}
\cdot
\|\mathbf{U}_h\|_2
\cdot
\underbrace{\|\operatorname{diag}(\overline{\mathbf{r}}_t)\|_2}_{\le\,1}
\le
\|\mathbf{U}_h\|_2,
\label{eq:conditional_candidate_jacobian_bound}
\end{equation}
where $\mathbf{D}_{h,t} = \operatorname{diag}(\mathbf{1} - \widetilde{\mathbf{h}}_t\odot\widetilde{\mathbf{h}}_t)$. Both bounding factors are at most 1 because $|\widetilde{h}_{t,i}|<1$ and $|\overline{r}_{t,i}|<1$ for all $i$. Therefore, enforcing $\|\mathbf{U}_h\|_2\le\rho_h$ via the projection in~\eqref{eq:uh_rescaling} guarantees that the Lipschitz constant of the conditional candidate-state mapping with respect to $\mathbf{h}_{t-1}$ is at most $\rho_h$.\qed

% =================================
% Proof of Proposition 2
% =================================
\section{Proof of Proposition~\ref{prop:complete_candidate_contraction}}
\label{appendix:B}

\emph{Boundedness of the hidden state.} Assume that the recurrent state is initialised such that $\|\mathbf{h}_0\|_{\infty}\leq1$, as satisfied by the zero initialisation used in the experiments. Since the hyperbolic tangent is applied element-wise, every component of the candidate state satisfies $|\widetilde{h}_{t,i}|<1$, so $\|\widetilde{\mathbf{h}}_t\|_\infty\le 1$. The complete GRU hidden-state update is
\begin{equation}
\mathbf{h}_t = (\mathbf{1}-\mathbf{z}_t)\odot\mathbf{h}_{t-1} + \mathbf{z}_t\odot\widetilde{\mathbf{h}}_t,
\end{equation}
where $0<z_{t,i}<1$ for every $i$, so each component $h_{t,i}$ is a convex combination of $h_{t-1,i}$ and $\widetilde{h}_{t,i}$. By induction, $\|\mathbf{h}_t\|_\infty\le 1$ for all $t\ge 0$.

\emph{Jacobian of the complete candidate-state mapping.} Defining the reset-gate pre-activation $\mathbf{a}_{r,t} = \mathbf{W}_r\mathbf{x}_t + \mathbf{U}_r\mathbf{h}_{t-1} + \mathbf{b}_r$ and $\mathbf{r}_t=\sigma(\mathbf{a}_{r,t})$, the chain rule gives
\begin{equation}
\frac{\partial\mathbf{r}_t}{\partial\mathbf{h}_{t-1}} = \mathbf{D}_{r,t}\mathbf{U}_r,
\end{equation}
where $\mathbf{D}_{r,t}=\operatorname{diag}(\mathbf{r}_t\odot(\mathbf{1}-\mathbf{r}_t))$. Since the sigmoid derivative satisfies $0<\sigma'(a)\le\tfrac{1}{4}$, we have $\|\mathbf{D}_{r,t}\|_2\le\tfrac{1}{4}$. The partial derivative of the gated term $\mathbf{r}_t\odot\mathbf{h}_{t-1}$ with respect to $\mathbf{h}_{t-1}$ is
\begin{equation}
\frac{\partial(\mathbf{r}_t\odot\mathbf{h}_{t-1})}{\partial\mathbf{h}_{t-1}}
= \operatorname{diag}(\mathbf{r}_t) + \operatorname{diag}(\mathbf{h}_{t-1})\mathbf{D}_{r,t}\mathbf{U}_r.
\end{equation}
Applying the chain rule to the candidate-state mapping and taking spectral norms yields
\begin{align}
\left\|\mathbf{J}_{\widetilde{h},t}\right\|_2 \le & \|\mathbf{D}_{h,t}\|_2\,\|\mathbf{U}_h\|_2 \!\big[\|\operatorname{diag}(\mathbf{r}_t)\|_2 \nonumber \\
&+ \|\operatorname{diag}(\mathbf{h}_{t-1})\|_2\,\|\mathbf{D}_{r,t}\|_2\,\|\mathbf{U}_r\|_2\big]\\
\le & \|\mathbf{U}_h\|_2\!\left(1+\tfrac{1}{4}\|\mathbf{U}_r\|_2\right)
\le \rho_h\!\left(1+\tfrac{\rho_r}{4}\right)
\le 1-\delta < 1, \nonumber
\end{align}
where we used $\|\mathbf{D}_{h,t}\|_2\le 1$, $\|\operatorname{diag}(\mathbf{r}_t)\|_2\le 1$, $\|\operatorname{diag}(\mathbf{h}_{t-1})\|_2\le 1$, and the assumed spectral-norm bounds and contraction condition~\eqref{eq:dcl_contraction_condition}. Hence, the candidate-state mapping is globally contractive on $\{\mathbf{h}:\|\mathbf{h}\|_\infty\le 1\}$ with contraction factor at most $1-\delta$.\qed

% =================================
% Bibliography
% =================================
\bibliographystyle{IEEEtran}
\bibliography{references}

% ========================================
%
% IEEEbiography
%
% ========================================
\begin{IEEEbiography}
[{\includegraphics[width=1in,height=1.25in,clip,keepaspectratio]{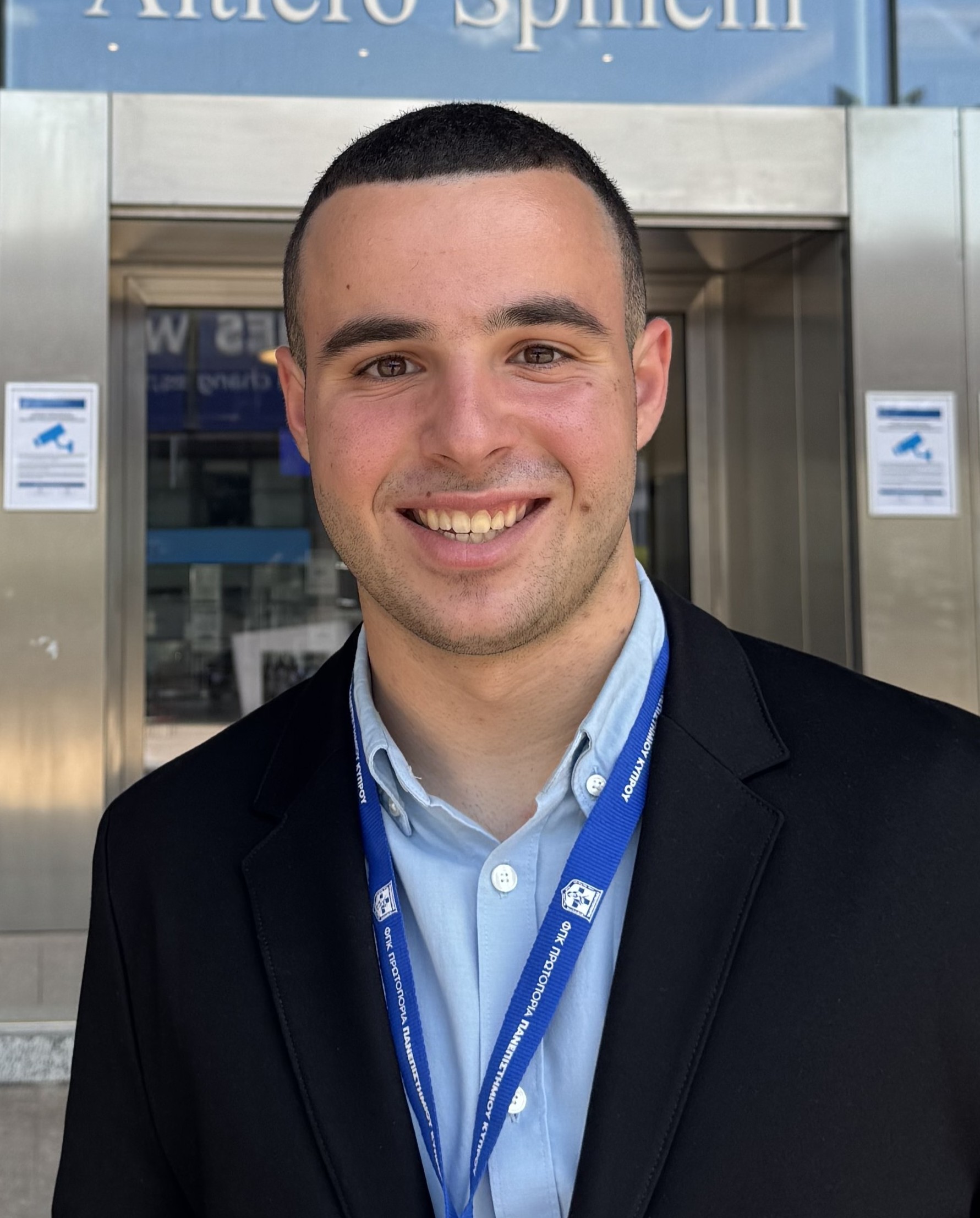}}]{Kyriakos Christodoulides}
is currently pursuing the B.Sc. degree in Electrical and Computer Engineering with the University of Cyprus, Nicosia, Cyprus. He is a Research Assistant with the Distributed and Networked Control Systems (DNCS) Group, Department of Electrical and Computer Engineering, University of Cyprus. He previously worked as a Research Assistant with the IRIDA Research Centre, University of Cyprus, during 2025--2026.       
His research interests include wireless communications, channel prediction, machine learning for communication systems, recurrent neural networks, reconfigurable intelligent surfaces, and physical-layer signal processing. His current research focuses on lightweight and stability-aware recurrent neural networks for real-time channel prediction in time-varying MIMO wireless channels.
\end{IEEEbiography}

\begin{IEEEbiography}[{\includegraphics[width=1in,height=1.25in,clip,keepaspectratio]{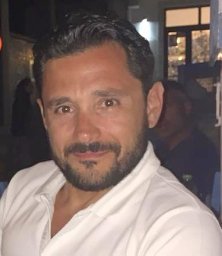}}]{Kyriakos M. Deliparaschos} (Senior Member, IEEE) received his B.Eng. (Hons) in Electronics Engineering and M.Sc. in Mechatronics from De Montfort University, UK, and the National Technical University of Athens (NTUA), Greece. He earned his Ph.D. from the Department of Signals, Control and Robotics, School of Electrical and Computer Engineering, NTUA. He is currently a Special Teaching Staff member in the Department of Electrical and Computer Engineering and Informatics at the Cyprus University of Technology (CUT). He has held academic and research positions at New York College and IST College (Athens), NTUA, CUT, Trinity College Dublin, and Cranfield University. His research interests include embedded systems, reconfigurable hardware acceleration, sensor fusion, resilient cyber-physical systems, intelligent control, real-time robotic navigation, and medical robotics for minimally invasive surgery.
\end{IEEEbiography}

\begin{IEEEbiography}
[{\includegraphics[width=1in,height=1.25in,clip,keepaspectratio]{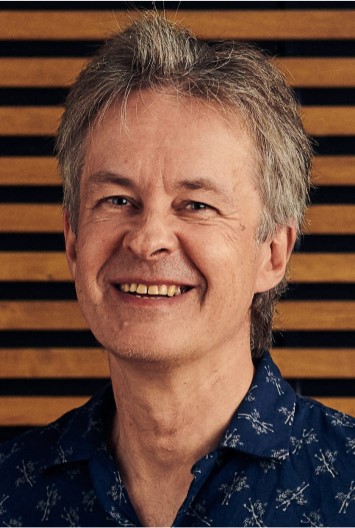}}] {Risto Wichman}~(Fellow, IEEE) received his M.Sc.  and D.Sc. (Tech) degrees in digital signal processing from Tampere University of Technology, Finland, in 1990 and 1995, respectively.  Following his doctoral studies, he joined Nokia Research Center, where he worked from 1995 to 2001 as a Senior Research Engineer. During this period, he gained substantial industrial research experience in the development of advanced signal processing techniques for wireless communication systems  In 2002, he joined the Department of Information and Communications Engineering at the Aalto University School of Electrical Engineering, Finland. Since 2008, he was appointed as a Full Professor.  His research
interests are in signal processing for wireless communication systems. His work builds on tools from communication theory, detection and estimation theory, stochastic geometry, random matrix theory, linear algebra,
and large-system analysis.
\end{IEEEbiography}

\begin{IEEEbiography}[{\includegraphics[width=1in,height=1.25in,clip,keepaspectratio]{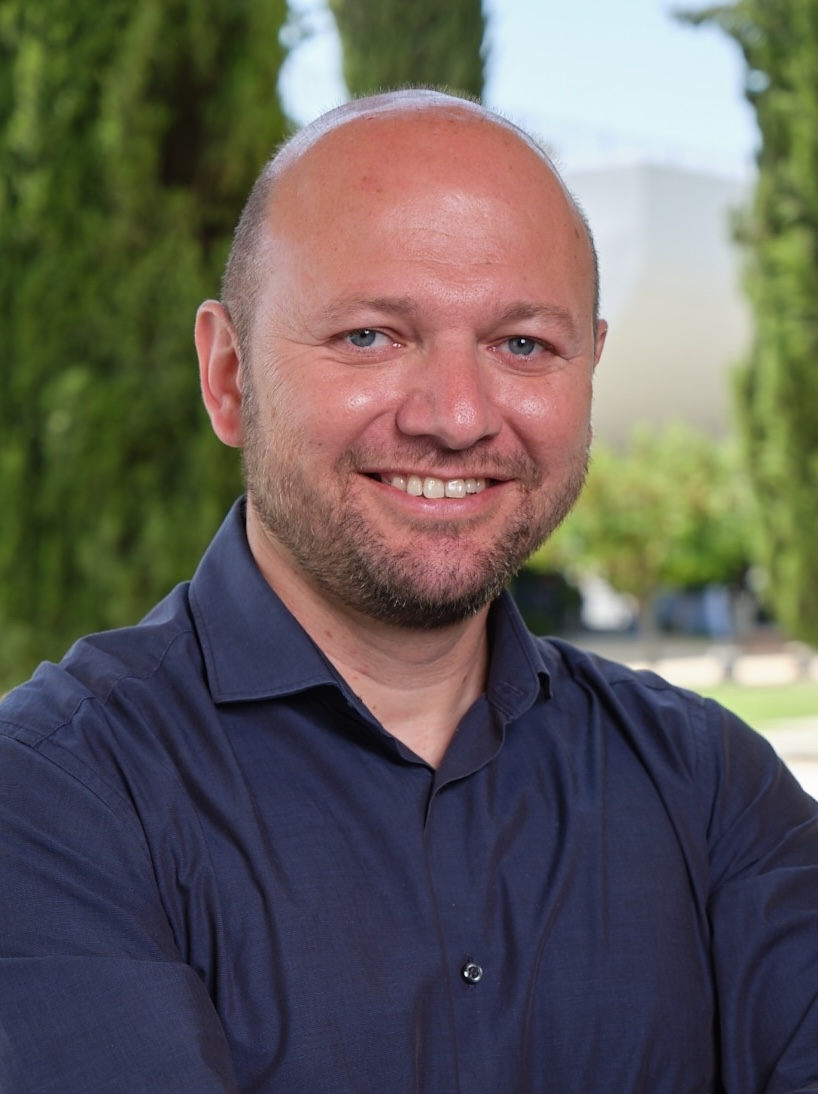}}]{Themistoklis Charalambous}~(Senior Member, IEEE) received the BA (First Class
Honours) and M.Eng. (Distinction) degrees in Electrical and Information Sciences from Trinity College, University of Cambridge, and the Ph.D. degree from the Control Laboratory, Department of Engineering, University of Cambridge.
Following his PhD, he held research and teaching positions at
Imperial College London, the University of Cyprus, KTH Royal Institute of Technology, and Chalmers University of Technology.
In 2017, he joined the Department of Electrical Engineering and Automation, School of Electrical Engineering, Aalto University, as an Assistant Professor; he was awarded the
title of Academy Research Fellow by the Academy of Finland in 2018 and was promoted to Associate Professor in 2020.
In September 2021, he joined the Department of Electrical and Computer Engineering, University of Cyprus, as a tenure-track Assistant Professor. 
%In May 2022, he received an ERC Consolidator Grant for the project ``MINERVA: Emerging Cooperative Autonomous Systems: Information for Control and Estimation.'' 
Since May 2026, he has been a tenured Associate Professor at the University of Cyprus. He is also a Visiting Professor at Aalto University and, since April 2023, at the FinEst Centre for Smart Cities.
%In September 2021, he continued his academic career at the University of Cyprus, where he joined the Department of Electrical and Computer Engineering as a tenure-track Assistant Professor. In May 2022, he was awarded an European Research Council (ERC) Consolidator Grant for the research project “MINERVA: Emerging Cooperative Autonomous Systems: Information for Control and Estimation.” From May 2026, he was appointed as a tenured Associate Professor at the University of Cyprus. He also remains affiliated with Aalto University as a Visiting Professor and, since April 2023, serves as a Visiting Professor at the FinEst Centre for Smart Cities.
%
%Prof. Charalambous is an Associate Editor of \textsc{IEEE Transactions on Automatic Control}
\end{IEEEbiography}

% ========================================
%
%
% That's all folks! :-)
%
%
% ========================================
\end{document}